\DeclareMathAlphabet{\pazocal}{OMS}{zplm}{m}{n}
\DeclareMathOperator*{\argmax}{arg\,max}
\newtheorem{theorem}{Theorem}
\newtheorem{definition}{Definition}
\begin{document}
% Put here what will go to headers as author
\title{OCC-MP: A Max-Pressure framework to prioritize transit and high occupancy vehicles}

% TODO: add macros for easier formatting of \author.
\author[1]{Tanveer Ahmed}
\ead{tpa5285@psu.edu}
\author[2]{Hao Liu\corref{cor1}}
\ead{hfl5376@psu.edu}
\cortext[cor1]{Corresponding author}
\author[1]{Vikash V. Gayah}
\ead{gayah@engr.psu.edu}
\address[1]{Department of Civil and Environmental Engineering, The Pennsylvania State University, University Park, PA, United States}

\address[2]{Department of Civil and Environmental Engineering, Jackson State University, Jackson, MS, United States}

% If necessary modify the number of words per table or figure default is set to
% 250 words per table
% \WordsPerTable{250}

% If words are counted manually, put that number here. This does not include
% figures and tables. This can also be used to avoid problems with texcount
% program i.e. if one does not have it installed.
% \TotalWords{200}

\begin{abstract}
Max-pressure (MP) is a decentralized adaptive traffic signal control approach that has been shown to maximize throughput for private vehicles. However, MP-based signal control algorithms do not differentiate the movement of transit vehicles from private vehicles or between high and single-occupancy private vehicles. Prioritizing the movement of transit or other high occupancy vehicles (HOVs) is vital to reduce congestion and improve the reliability and efficiency of transit operations. This study proposes OCC-MP: a novel MP-based algorithm that considers both vehicle queues and passenger occupancies in computing the weights of movements. By weighing movements with higher passenger occupancies more heavily, transit and other HOVs are implicitly provided with priority, while accounting for any negative impacts of that priority on single occupancy vehicles. And, unlike rule-based transit signal priority (TSP) strategies, OCC-MP more naturally also accommodates conflicting transit routes at a signalized intersection and facilitates their movement, even in mixed traffic without dedicated lanes. Simulations on a grid network under varying demands and transit configurations demonstrate the effectiveness of OCC-MP at providing TSP while simultaneously reducing the negative impact imparted onto lower occupancy private vehicles. Furthermore, OCC-MP is shown to have a larger stable region for demand compared to rule-based TSP strategies integrated into the MP framework. The performance of OCC-MP is also shown to be robust to errors in passenger occupancy information from transit vehicles and can be applied when passenger occupancies of private vehicles are not available. Finally, OCC-MP can be applied in a partially connected vehicle (CV) environment when a subset of vehicles is able to provide information to the signal controller, outperforming baseline methods at low CV penetration rates.
\end{abstract}

\begin{keyword} 
Max Pressure algorithm; Adaptive traffic signal control; Transit signal priority; Microsimulation
\end{keyword}

\maketitle

\section{Introduction}
Adaptive Traffic Signal Control (ATSC) is an intelligent transportation system technology that aims to optimize traffic flow by dynamically adjusting signal timings based on real-time traffic patterns. Max Pressure (MP) is a decentralized ATSC framework that has gained popularity due to its effectiveness in improving vehicle throughput at intersections. Initially developed for packet transmission scheduling in wireless networks \citep{tassiulas1990a}, the MP concept was later extended to traffic signal control by \citep{varaiya2013a}. MP-based traffic signal control algorithms operate independently at each intersection and rely on local information from approach links upstream and downstream of the intersection. Unlike some other ATSC approaches, MP algorithms do not require knowledge of future traffic demands, making them more practical and applicable in real-world settings. MP control is based on distributing vehicles from longer queues to shorter queues \citep{levin2023a}. Specifically, the control policy assigns the right of way to the phase in a traffic signal that serves movements with higher level of congestion, toward downstream links that are more uncongested in order to maximize throughput. The level of congestion can be measures using various metrics such as the number of vehicles, average travel time or average delay along a link \citep{dixit2020a, kouvelas2014a, le2015a, lioris2016a, liu2022novel, liu2023total, mercader2020a, varaiya2013a, xiao2014a}. The most desirable property of the MP algorithm is maximum stability, which refers to its ability to serve a feasible set of demands if those demands can be accommodated by any other control strategy \citep{varaiya2013a}. The set of feasible demands is known more commonly as the stable region. Modifications to the MP in literature have reported either maximum stability or stability properties on a reduced stable region \citep{barman2023a, gregoire2014a, le2015a, levin2020a, li2019a, liu2022novel, pumir2015a, wu2018a, xiao2014a, xu2022a}. While there have been several variations of the MP algorithm proposed since 2013, most treat all vehicles in a similar manner and do not distinguish between low and high occupancy vehicles (HOVs). However, prioritizing HOVs -- particularly transit vehicles -- is critical to reduce vehicular demand and alleviate congestion.

To that end, transit signal priority (TSP) aims to enhance the performance of public transportation by granting priority to transit vehicles at traffic signals. Its primary objective is to alleviate delays caused by traffic signals and improve the reliability, efficiency, and speed of public transportation services. TSP techniques can generally be classified into three categories: “active”, “passive”, and "adaptive". Passive TSP relies on pre-programmed signal timing plans to prioritize public transit vehicles at specific times or on designated routes, without direct communication between the transit vehicle and traffic signals. It is effective for fixed-route bus lines with predictable schedules \citep{lin2019a, stephanedes1996a}. In contrast, active TSP involves real-time communication between transit vehicles and traffic signals, allowing for dynamic adjustments to signal timing based on the vehicle’s needs. Active TSP requires a two-way communication system, with transit vehicles sending requests to the traffic signal system, which then responds by adjusting signal timing through methods like green extension and red truncation \citep{christofa2011a, currie2008a, lin2015a, truong2019a, zeng2021a}. Most of these studies have focused on developing TSP strategies based on fixed cycle lengths or are limited to dedicated bus lanes. As a result, these strategies overlook the potential consequences on private vehicles i.e., overall traffic flow. In addition, these studies rely on rule-based approaches and optimization under various constraints to balance travel time of transit and private vehicles \citep{lee2022a}.

Adaptive-TSP systems dynamically respond to changing traffic conditions and adjust signal timings accordingly. These systems utilize real-time traffic information and other vehicle information -- including vehicle occupancies -- to optimize performance measures, such as minimizing delay for both vehicles and passengers. In many cases, the optimization objectives consider prioritizing factors like maximizing person capacity or minimizing person delay, schedule delay, vehicle queues and emissions rather than vehicle-based measures alone \citep{chen2023a, christofa2016a, christofa2013a, christofa2011b, ding2015a, lee2022a, li2020a, ma2014a, yu2018a, yu2017a, zeng2015a, zhao2018a, zhao2019a}. The computational complexity of these problems calls for formulation as mixed integer linear problems that are commonly solved using techniques such as dynamic programming \citep{wu2019a, wu2020a}, genetic algorithms \citep{ghanim2015a, yang2023a}, reinforcement learning \citep{alizadeh2019a, guo2021a, ling2004a, long2022a}. With the emergence of connected vehicle (CV) technology, researchers have leveraged two-way communication, precise vehicle location tracking, and passenger count information in TSP research \citep{chen2022a, hu2022a, hu2015a, yang2019a, yang2023a}. More recently, passenger occupancy-based signal timing has been developed using reinforcement learning; see e.g., \citep{vlachogiannis2023humanlight}. However, these learning-based methods rely on extensive training and are not generally transferable to situations that were not observed within the training process. And, as will be described later in this paper, the reward considered in \citep{vlachogiannis2023humanlight} does not appropriately account for downstream space availability, which could limit its effectiveness.  

A recent study \citep{xu2022a} proposed the integration of rule-based TSP into MP control. The study demonstrated that the combined MP-TSP policy outperforms fixed-time-TSP and adaptive-TSP policies in reducing vehicle travel times, while having the ability to serve a larger demand. However, the proposed policy relies on constraints that reduce the stable region or private vehicles compared to the original max pressure policy. Moreover, the control uses a set of rules that switch between the original control policy and TSP depending on the detection of buses thus, it provided unconditional priority to buses at the expense of other vehicles. Furthermore, the application was also limited to arterials with dedicated bus lanes, while most urban networks have shared lanes for transit vehicles and private automobiles. 

In light of these drawbacks, this study proposes a novel occupancy-based MP policy (OCC-MP) that combines passenger occupancies and vehicle queues when determining signal timings. By increasing the weight of movements with more queued passengers in the signal timing process, transit and other HOVs are implicitly prioritized over private automobiles as they typically carry more passengers. The study also analytically proves that OCC-MP maintains the maximum stability property for isolated intersections without reducing the size of the stable region. A stability analysis using micro-simulation also reveals that the OCC-MP policy has a similar stable region to the original MP, as well as a larger stable region compared to rule-based MP that provides TSP. Simulation tests demonstrate that OCC-MP can provide priority to higher occupancy private automobiles, if occupancy information of these vehicles is available. Unlike previous attempts to integrate TSP with MP, the proposed strategy can also be applied to networks with shared bus lanes (i.e., transit vehicles and private automobiles move in the same lanes).  OCC-MP can also be applied in a partial CV environment, both when a subset of vehicles can be detected and when a subset can provide occupancy information to the signal controller. The performance of OCC-MP is also shown to be robust to errors in passenger occupancy information from transit vehicles. 

The remainder of this paper is organized as follows. The next section provides an overview of the proposed OCC-MP control policy. This is followed by the simulation setup used to evaluate the performance of OCC-MP against baseline methods. Then, the results of the experiments are presented, including a comparative analysis between the proposed methods and the baseline approaches. The last section concludes the paper by highlighting the important findings and suggesting potential directions for future research.

\section{Method}
\subsection{Max Pressure}
Before the MP signal control is described, some notations are provided. Consider a network made up of links and nodes: each link represents a directional road segment between two adjacent intersections which are represented by nodes. Upstream and downstream links at an intersection facilitate the movement of incoming and outgoing vehicles. Movement $\left(l,m\right)$ represents the pair of links that serves vehicles from an upstream link $l$ to a downstream link $m$. $U(i)$ denotes the set of all upstream links at intersection $i$, and $D(l)$ denotes the set of links downstream of link $l$. The proportion of traffic that turns from link $l$ to link m is denoted by $r(l,m)$. The rate vehicles are allowed to pass through an intersection from link $l$ to m per unit time is represented by the saturation flow, $C(l,m)$, which is a random but upper bounded variable. Each intersection serves a set of signal phases denoted by $\Phi_i$  where each signal phase serves a set of vehicular movements. $L_i^\phi$ contains the set of movements served by phase $\phi$ at intersection $i$. 

The MP algorithm involves three key steps; 
\begin{enumerate}
 \item Obtain the weights $(w)$ of each movement. Weight is assigned to each movement by calculating the difference between the metric value of that movement and the average value of the metric for its downstream movements. This weight serves as an indicator of the level of congestion of both the upstream and downstream end of a movement. Note that for isolated intersections or intersections at the boundary of a network, downstream movements serving exit vehicles are not considered.
	
 \item Calculate the pressure $(P)$ of each phase. The pressure of each phase is calculated by summing up the weight multiplied by the associated saturation flow over all movements served by that phase. This is used to determine the relative importance of each phase served by the signal. 
	
 \item Determine signal timing $(S)$ using pressure. In acyclic MP algorithms, the phase with the highest pressure is activated in the next time step without regarding the sequence of phases. On the other hand, in cyclic MP algorithms, the green time for each phase in the next cycle is assigned proportionally based on the pressures of the respective phases in the designated phase sequence. The proposed model follows the former type.
\end{enumerate}

\subsection{Proposed OCC-MP policy}
The original MP policy proposed in \citep{varaiya2013a} -- referred to in this paper as the Q-MP -- uses the number of queued vehicles on each link as the metric to determine the weights of the movements. Thus, it treats both buses and private vehicles equally and disregards the fact that a bus can transport significantly more passengers compared to a single-occupancy passenger vehicle. Consequently, in the Q-MP algorithm, the right of way may be assigned to a movement with five single-occupancy vehicles rather than a bus carrying fifty passengers. In contrast, traditional rule-based TSP algorithms (including that integrated with MP in \citep{xu2022a}) prioritize bus movements at an intersection regardless of the level of congestion on adjacent links. This means that a bus with no passengers would be given the right of way over a conflicting movement with many queued vehicles posing the risk of a queue spillback.

To address these limitations, the proposed algorithm (OCC-MP) considers the number of queued vehicles and the average occupancy of the vehicles queued on the upstream movements in order to prioritize movements involving transit or HOVs. Specifically, the weight assigned to each movement is calculated as the product of the difference between the upstream and downstream queue lengths and the average occupancy upstream (\ref{eq:weight_occmp}).

\begin{equation}\label{eq:weight_occmp}
w(l,m)=o(l,m)\left[x(l,m)-\sum_{n\in D(m)}x(m,n)r(m,n)\right]^{+} =o(l,m)w_q (l,m)^{+}
\end{equation}

where $o(l,m)$ is the average occupancy over all vehicles in the upstream movement $(l,m)$; $x(i,j)$ is the number of vehicles queued on movement $(i,j)$, and the $+$ symbol around the square bracket denotes the maximum of either 0 or the value inside the square brackets. The term in the square brackets is the weight of the original MP (Q-MP),  $w_q (l,m)$. An additional modification is made so that movements with negative weights are set to 0. This arises when downstream links are more congested than upstream links. Ignoring negative weights is useful when a phase serves multiple movements -- e.g., a through and a right turn movement -- and the weight of one movement is negative while the other is positive. Often, a negative weight on a minor movement (e.g., a right turn) adversely reduces the weights of the major movement (e.g., the through movement). This reduces the pressure of that phase and leads to activation of less critical phases, reducing throughput. Such a modification has been used in prior studies and shown to improve network performance \citep{gregoire2015a, AHMED2024122}. This term is then multiplied by the average occupancy of vehicles upstream. Since all vehicles have an occupancy of at least 1 and the maximum passenger capacity is finite, the average occupancy on a movement is a positive and bounded number. Therefore, the weight calculation in the OCC-MP algorithm is essentially a scaled-up version of the weight defined in the Q-MP algorithm.

At each update interval, the pressure of phase $\phi$ can be expressed as:
\begin{equation}\label{eq:pressure_occmp}
P^{\phi}=\sum_{(l,m)\in L^{\phi}_i}w(l,m)\times C(l,m)\times S(l,m), \quad \forall \phi\in \Phi_i
\end{equation}
where $S(l,m)$ is a binary variable associated with phase $\phi$ indicating whether movement $(l,m)$ is served by phase $\phi$.

Finally, at intersection $i$, the policy selects the phase with the maximum pressure in the set of all phases $\Phi_i$ (\ref{eq:signal_occmp}). In this study, the signals are updated in the subsequent time step.  

\begin{equation}\label{eq:signal_occmp}
S^*=\argmax_{\phi\in \Phi_i}P^{\phi}
\end{equation}

The benefit of considering the average occupancy is that it allows the control policy to distinguish between movements that serve vehicles of higher occupancy and those that do not. Figure \ref{fig:figure_1} provides a simple illustrative example with two one-way movements. Only private vehicles are queued in the N-S direction while both private vehicles and a bus are queued in the W-E direction. The W-E movement has 3 vehicles queued upstream and 2 vehicles downstream; thus its weight under the Q-MP policy is $w_q (W,E)=(3-2)=1$. The N-S movement has 5 queued vehicles on its upstream link and 2 vehicles downstream, hence, its weight under the Q-MP policy is $w_q (N,S)=(5-2)=3$. Therefore, Q-MP prioritizes the N-S movement over the W-E movement. Under the OCC-MP policy however, the W-E movement a weight of, $w(W,E)=(20+2+2)/3×(3-2)=8$ while the N-S movement has a weight of $w(N,S)=w_q (N,S)=(1+1+1+1+1)/5×(5-2)=3$. Therefore, OCC-MP provides TSP by prioritizing the W-E movement.

\begin{figure}[!ht]
    \centering
        \includegraphics[width=3in]{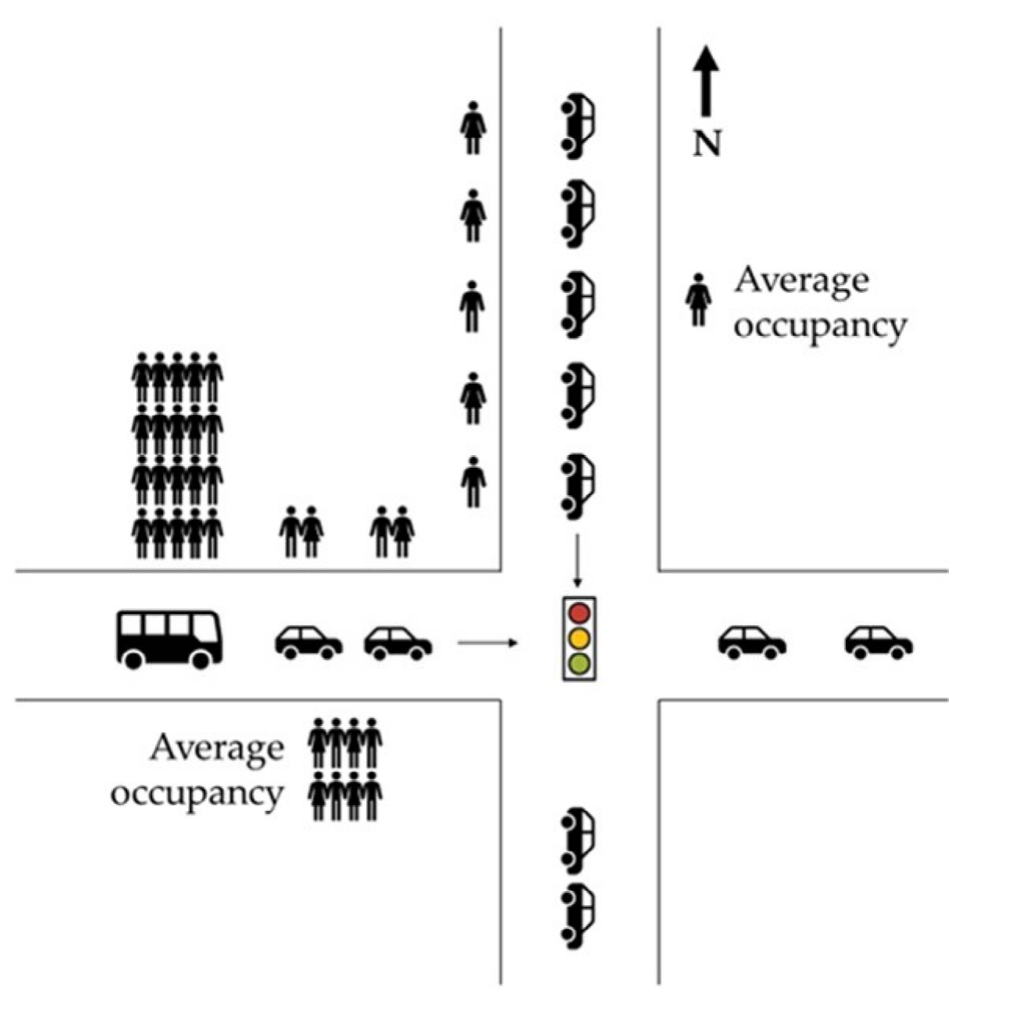}
    	\centering
    	\caption{Example of transit signal priority using OCC-MP.}
    	\label{fig:figure_1}
\end{figure}

The occupancy of the downstream vehicles is not considered when calculating the weight of the movement. This is because the downstream portion of the weight captures supply constraints on that link. Therefore, simply replacing the number of queued vehicles with the number of passengers on both upstream and downstream -- as was done in \citep{vlachogiannis2023humanlight} -- would mean downstream supply issues. The presence of downstream vehicles accounts for available storage space on the receiving links. More waiting passengers downstream, specifically in buses, does not necessarily mean that a link has little capacity to accommodate vehicles from upstream links. Therefore, only the average upstream occupancy is considered. In cases where there are no vehicles downstream or in isolated intersections where downstream movements are ignored in the weight calculation, the weight of the movements in the OCC-MP algorithm is equal to the number of passengers upstream. For example, in Figure \ref{fig:figure_1}, if there were no vehicles downstream on either movements, $w(N,S)$ and $w(W,E)$ would represent the number of queued passengers on the N-S and the W-E movements respectively. 

Since OCC-MP requires information on vehicle occupancy, it is assumed that such information is available to the controller. In scenarios where private vehicle occupancies are not readily available, an average occupancy value is assumed. However, in a fully connected vehicle environment, it is assumed that the occupancy information may be readily accessible. On the contrary, many buses are currently equipped with automatic passenger counting (APC) systems that allow real-time information of the number of passengers onboard a transit vehicle. Therefore, the exact bus occupancies are assumed to be available for weight calculation.

Intersections where conflicting bus routes are served by different phases often receive simultaneous priority requests. Most prior studies have used either a first-come-first-serve or model-based methods (e.g., person-delay optimization, schedule-deviation minimization) to decide the sequence of phases at conflict intersections \citep{christofa2013a, head2006a, hu2016a, ma2013a}. These methods are subject to strict constraints that reduce the efficacy and increase the complexity of TSP control policies. The proposed OCC-MP handles conflict intersections much more efficiently without any additional constraints or assumptions. Specifically, if multiple buses are competing for right of way, OCC-MP selects the phase with the highest pressure considering the size of the queue on the link and the average occupancy of both buses and private vehicles. This way, OCC-MP is able to resolve conflicting bus movements at intersections without compromising the flow of private vehicles.

\subsection{Maximum stability for isolated intersections}\label{subsec:max_stability}
A signal control policy is stable if the average service rate in the network is equal to the average demand, i.e., the average number of vehicles in the network remain bounded \citep{levin2023a}. Maximum stability refers to the property that the policy can serve a traffic demand if this demand can be accommodated by an admissible control strategy. This section proves the maximum stability property for the proposed OCC-MP for isolated intersections.

For isolated intersections, as mentioned before, all outgoing links are regarded as sink links, and the traffic states on those links are not considered in the pressure calculation. Therefore, only the incoming links for isolated intersections are considered for the rest of this section. Traffic dynamics, in terms of the evolution of number of vehicles, on movement $\left(l,m\right)$ can be expressed as:

\begin{equation}\label{eq:evolution}
x\left(l,m\right)\left(t+1\right)=x\left(l,m\right)\left(t\right)+d\left(l,m\right)\left(t\right)-\min{\{}C\left(l,m\right)\left(t\right)S\left(l,m\right)\left(t\right),x\left(l,m\right)\left(t\right)\}
\end{equation}
where $d\left(l,m\right)\left(t\right)$ is the external demand for movement $(l,m)$ at time $t$.

\begin{definition}\label{def:feasibledemand}
A demand $\mathbf{d}$ is feasible if there exists an admissible control sequence $\{\mathbf{S}(t):t=0,1,2,...\}$ such that
\begin{equation}\label{eq:feasible}
    \bar{d}(l,m)\le \bar{S}(l,m)c(l,m), \quad \forall (l,m)
\end{equation}
where $\bar{d}(l,m)$ is the average external demand of movement $(l,m)$, $\bar{S}(l,m)$ is the proportion of time steps that movement $(l,m)$ is activated, and $c(l,m)$ is the average saturation flow for movement $(l,m)$.  
\end{definition}

The set of demand satisfying Eq. \eqref{eq:feasible}, denoted by $\pazocal{D}$, is called feasible demand region, and $\pazocal{D}^0$ is used to indicate the interior of $\pazocal{D}$. Let $\pazocal{S}$ indicate the set of admissible phases at the isolated intersection and $co(\pazocal{S})$ denote the convex hull of $\pazocal{S}$, which can be expressed as:

\begin{equation}\label{eq:ch_phase}
    co(\pazocal{S})=\{\sum_{\mathbf{S}^{e}\in \pazocal{S}}\lambda_e\mathbf{S}^{e}|\lambda_e\ge 0, \sum_{e}\lambda_e=1\}
\end{equation}

It is easy to prove that a control matrix $\mathbf{\Sigma}$ is in $co(\pazocal{S})$ if and only if there exists an admissible control sequence $\{\mathbf{S}(t):t=0,1,2,...\}$ such that:

\begin{equation}\label{eq:hull}
    \bar{S}(l,m)=\mathbf{\Sigma}(l,m), \quad \forall (l,m)
\end{equation}

Combining Definition \ref{def:feasibledemand} and Eq. \eqref{eq:hull} obtains that a demand $\mathbf{d}$ is in the feasible region if and only if there exists a matrix $\mathbf{\Sigma}\in co(\pazocal{S})$ such that
\begin{equation}\label{eq:demand2}
    \bar{d}(l,m)\le \mathbf{\Sigma}(l,m)c(l,m), \quad \forall (l,m)
\end{equation}

\begin{definition}
    A signal control sequence $\{\mathbf{S}(t):t=0,1,2,..., T\}$ stabilizes the queue process in the mean if the average queue length in the network is upper bounded, i.e.,
    
    \begin{equation}\label{eq:def_ms_v}
        \frac{1}{T}\sum_{t=1}^T\sum_{l,m}E(x(l,m)(t))\le M, \quad T=1,2,3...
    \end{equation}
    where $M<\infty$.
\end{definition}

\begin{theorem}[Maximum stability]\label{theorem:ms}
    Assume the expected value of the average occupancy of a movement is independent of the queue length of that movement, the proposed OCC-MP algorithm stabilizes the queue process for isolated intersections if $\mathbf{d}\in \pazocal{D}^0$.
\end{theorem}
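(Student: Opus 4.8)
The plan is to run a Lyapunov--drift (Foster--Lyapunov) argument in the spirit of the original max-pressure proof of \citep{varaiya2013a}, but with a Lyapunov function weighted by occupancy so that the OCC-MP decision rule surfaces naturally in the drift. Set $\bar o(l,m):=E[o(l,m)(t)]$, which by the standing hypothesis does not depend on the queue state and, since every vehicle carries between one and a finite maximum number of passengers, satisfies $1\le\bar o(l,m)\le\bar o_{\max}<\infty$. Then take
\[
V(\mathbf{x})=\sum_{(l,m)}\bar o(l,m)\,x(l,m)^{2}.
\]
For an isolated intersection the downstream term of \eqref{eq:weight_occmp} is dropped and queues are nonnegative, so $w(l,m)=o(l,m)\,x(l,m)$ and the policy \eqref{eq:signal_occmp} selects $S^{*}(t)=\argmax_{\mathbf{S}\in\pazocal{S}}\sum_{(l,m)}o(l,m)(t)\,x(l,m)(t)\,C(l,m)(t)\,S(l,m)$.

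First I would bound the one-step conditional drift of $V$. Rewriting \eqref{eq:evolution} as $x(l,m)(t+1)=\big(x(l,m)(t)-C(l,m)(t)S^{*}(l,m)(t)\big)^{+}+d(l,m)(t)$ and using the elementary bound $\big((a-b)^{+}+c\big)^{2}\le(a-b)^{2}+2ac+c^{2}$ for $a,b,c\ge 0$, then multiplying by $\bar o(l,m)$, summing over movements and taking $E[\,\cdot\mid\mathbf{x}(t)]$, yields a bound of the form $E[V(\mathbf{x}(t+1))-V(\mathbf{x}(t))\mid\mathbf{x}(t)]\le K-2\sum_{(l,m)}\bar o(l,m)\,x(l,m)(t)\big(\mu^{*}(l,m)(t)-\bar d(l,m)\big)$, where $\mu^{*}(l,m)(t):=E[C(l,m)(t)S^{*}(l,m)(t)\mid\mathbf{x}(t)]$ is the conditionally expected served flow, $\bar d(l,m)=E[d(l,m)(t)]$ is the state-independent mean demand, and $K<\infty$ absorbs the bounded second moments of saturation flows and demands.

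The crux is to lower-bound $\sum_{(l,m)}\bar o(l,m)x(l,m)(t)\mu^{*}(l,m)(t)$ using feasibility of the demand and optimality of OCC-MP. Since $\mathbf{d}\in\pazocal{D}^{0}$, \eqref{eq:demand2} provides $\mathbf{\Sigma}\in co(\pazocal{S})$ and $\epsilon>0$ with $\bar d(l,m)+\epsilon\le\Sigma(l,m)c(l,m)$ for all $(l,m)$; write $\mathbf{\Sigma}=\sum_{e}\lambda_{e}\mathbf{S}^{e}$ with $\mathbf{S}^{e}\in\pazocal{S}$. On every sample path $S^{*}(t)$ maximizes $\sum_{(l,m)}o(l,m)(t)x(l,m)(t)C(l,m)(t)S(l,m)$ over $\mathbf{S}\in\pazocal{S}$, hence over $co(\pazocal{S})$, so its value is at least that attained by $\mathbf{\Sigma}$. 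Taking $E[\,\cdot\mid\mathbf{x}(t)]$ and using that the occupancies are independent of the queue state (so that the realized occupancies on the left-hand side may be replaced by $\bar o(l,m)$ and $E[o(l,m)(t)C(l,m)(t)\mid\mathbf{x}(t)]=\bar o(l,m)c(l,m)$ on the right) gives $\sum_{(l,m)}\bar o(l,m)x(l,m)(t)\mu^{*}(l,m)(t)\ge\sum_{(l,m)}\bar o(l,m)x(l,m)(t)\Sigma(l,m)c(l,m)\ge\sum_{(l,m)}\bar o(l,m)x(l,m)(t)\big(\bar d(l,m)+\epsilon\big)$. Substituting into the drift bound collapses it to $E[V(\mathbf{x}(t+1))-V(\mathbf{x}(t))\mid\mathbf{x}(t)]\le K-2\epsilon\sum_{(l,m)}\bar o(l,m)\,x(l,m)(t)$.

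To conclude I would take total expectations, sum over $t=0,\dots,T-1$, telescope the left-hand side, and use $V\ge 0$ together with $\bar o(l,m)\ge 1$ to get $\frac{1}{T}\sum_{t=1}^{T}\sum_{(l,m)}E[x(l,m)(t)]\le\frac{1}{2\epsilon}\big(K+\frac{1}{T}E[V(\mathbf{x}(0))]\big)\le M<\infty$, which is exactly condition \eqref{eq:def_ms_v}. I expect the main obstacle to be the occupancy reconciliation in the third step: the controller acts on the \emph{realized} occupancies $o(l,m)(t)$, whereas the Lyapunov drift is written through the \emph{mean} occupancies $\bar o(l,m)$, so one must argue -- and this is where the independence hypothesis (together with the fact that the per-movement occupancy average concentrates around $\bar o(l,m)$ once the queue on that movement is long) does the real work -- that the discrepancy between $\sum o(l,m)(t)xCS^{*}$ and $\sum\bar o(l,m)xCS^{*}$ is of lower order in $\|\mathbf{x}(t)\|$ and hence cannot erode the $\epsilon$-slack. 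The remaining pieces -- the $(\cdot)^{+}$ rewriting of the $\min$ in \eqref{eq:evolution}, the second-moment bounds, and the telescoping -- are routine max-pressure bookkeeping.
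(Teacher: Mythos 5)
Your proposal follows the same Lyapunov--drift skeleton as the paper's proof (both are adaptations of the Varaiya argument with an occupancy-weighted quadratic Lyapunov function), but you place the occupancy averaging differently, and the obstacle you flag at the end is real and is precisely where the two routes diverge. You weight the Lyapunov function by the \emph{mean} occupancies $\bar o(l,m)$, which makes the telescoping clean but leaves you needing to lower-bound $\sum_{(l,m)}\bar o(l,m)\,x(l,m)\,E[C(l,m)S^{*}(l,m)\mid\mathbf{x}]$, whereas the pathwise optimality of OCC-MP only controls $E[\sum_{(l,m)}o(l,m)(t)\,x(l,m)\,C(l,m)S^{*}(l,m)\mid\mathbf{x}]$. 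Since $S^{*}$ is a deterministic function of the realized occupancies, it is correlated with $\mathbf{o}(t)$, so you cannot simply pull $\bar o$ out; your suggested fix via concentration of the per-movement average occupancy would require extra assumptions (e.g., i.i.d.\ passenger counts across vehicles) and careful control of the error relative to the $\epsilon$-slack, none of which the theorem's hypothesis supplies. The paper sidesteps this entirely by putting the \emph{realized} occupancies into the Lyapunov function, i.e., working with $|\sqrt{\mathbf{o}}(t)\odot\mathbf{X}(t)|^{2}$: conditioning on $(\mathbf{X}(t),\mathbf{o}(t))$, the pathwise optimality of $S^{*}$ against any $\mathbf{\Sigma}\in co(\pazocal{S})$ gives $E\{\alpha_1\mid\mathbf{X},\mathbf{o}\}\le-\epsilon\sum o(l,m)x(l,m)$, and then the crude bound $o(l,m)\ge 1$ yields the negative drift in $|\mathbf{X}(t)|$ with no concentration needed. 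The independence hypothesis is invoked only at the final telescoping step, to replace the time-varying weights by the fixed $E\{\mathbf{o}\}$ so that the sum over $t$ collapses. I would recommend adopting that device: it turns your ``main obstacle'' into a one-line inequality. Everything else in your sketch --- the $(\cdot)^{+}$ rewriting of the dynamics, the bounded second-moment terms, the use of $\mathbf{d}\in\pazocal{D}^{0}$ through Eq.~\eqref{eq:demand2}, and the telescoping --- matches the paper's bookkeeping.
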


\begin{proof}
As mentioned before, it is assumed the supply downstream of the outgoing links at isolated intersections is infinity, so the number of vehicles on the outgoing links is always upper bounded. Therefore, only number of vehicles on the incoming links is considered in this proof.

Let $\mathbf{\delta}$ represent the difference in the queue length of movement $(l,m)$ between two consecutive steps under the control of OCC-MP, $\mathbf{S}^*$. Eq. \eqref{eq:evolution} leads to, 
\begin{equation}\label{eq:delta}
    \delta(l,m)(t+1)=x(l,m)(t+1)-x(l,m)(t)=d(l,m)(t)-\min(x(l,m)(t), C(l,m)(t)S^*(l,m)(t))
\end{equation}

Let $\sqrt{\mathbf{o}}(t)$ indicate the vector of which each element is the square root of the average occupancy of a movement at the isolated intersection, i.e., $\sqrt{\mathbf{o}}(t)=\{\sqrt{o(l,m)}: \forall (l,m)\}$. Then,
\begin{equation}\label{eq:keyequation}
    |\sqrt{\mathbf{o}}(t)\odot\mathbf{X}(t+1)|^2-|\sqrt{\mathbf{o}}(t)\odot\mathbf{X}(t)|^2=2(\sqrt{\mathbf{o}}(t)\odot\mathbf{X}(t))^T(\sqrt{\mathbf{o}}(t)\odot\mathbf{\delta}(t))+|\sqrt{\mathbf{o}}(t)\odot\delta(t+1)|^2=2\alpha+\beta
\end{equation}
where $\mathbf{A}\odot \mathbf{B}$ is the Hadamard product of vectors $\mathbf{A}$ and $\mathbf{B}$, $|\mathbf{X}|\equiv\sum_{x_i \in \mathbf{X}}x_i$,  $|\mathbf{X}|^2\equiv\sum_{x_i \in \mathbf{X}}(x_i)^2$. 

Next, it is proved that there exist $k<\infty$ and $\epsilon>0$ such that
\begin{equation}
    E\{(|\sqrt{\mathbf{o}}(t)\odot\mathbf{X}(t+1)|^2-|\sqrt{\mathbf{o}}(t)\odot\mathbf{X}(t)|^2)|\mathbf{X}(t), \sqrt{\mathbf{o}}(t)\}\le k-\epsilon|\mathbf{X}(t)|
\end{equation}

According to Eq. \eqref{eq:delta}, $\beta$ can be expressed as: 
\begin{equation}\label{eq:beta}
    \beta=\sum_{l,m}o(l,m)(t)\left[d(l,m)(t)-\min(x(l,m)(t), C(l,m)(t)S^*(l,m)(t))\right]^2
\end{equation}

Since $o(l,m)(t)$, $d(l,m)(t)$, and $\min(x(l,m)(t), C(l,m)(t)S^*(l,m)(t))$ are all upper bounded by a constant, it is easy to see that $\beta$ is upper bounded by a constant.

For simplicity, $(l,m)$ from the expression of $\alpha$ is omitted in the following. Combining Eqs. \eqref{eq:delta} and \eqref{eq:keyequation} obtains,
\begin{equation}\label{eq:beta}
\begin{split}
    \alpha& =\sum_{l,m}o(t)x(t)(d(t)-\min(x(t), C(t)S^*(t))\\
    & = \sum_{l,m}o(t)x(t)(d(t)-C(t)S^*(t))+\sum_{l,m}o(t)x(t)(C(t)S^*(t)-\min(x(t), C(t)S^*(t)))\\
    & = \sum_{l,m}\alpha^{'}_1(l,m)(t)+\sum_{l,m}\alpha^{'}_2(l,m)(t)\\
    & = \alpha_1+\alpha_2
\end{split}
\end{equation}

It is easy to obtain that for each movement $(l,m)$,

\begin{subequations}\label{eq:alpha2}
	\begin{empheq}[left={\alpha^{'}_2(l,m)(t)\empheqlbrace\,}]{alignat=2}
	& =0, && \quad \text{if $x(l,m)(t) \ge C(l,m)(t)S^*(l,m)(t)$}\\
	& \le o(l,m)(t)C^2(l,m)(t) && \quad \text{otherwise}
	\end{empheq}
\end{subequations}

Therefore, $\alpha_2=\sum_{l,m}\alpha^{'}_2(l,m)(t)$ is also upper bounded by a constant.  

Since $S^*$ maximizes $\sum_{l,m}o(t)x(t)C(t)S(t)$ and $d(t)$ is in the stable region, according to Eq. \eqref{eq:demand2}, there exist a control matrix $\mathbf{\Sigma}\in co(\pazocal{S})$ and $\epsilon>0$ such that
\begin{equation}
    \begin{split}
        E\{\alpha_1|\mathbf{X}(t), \sqrt{\mathbf{o}}(t)\}&\le \sum_{l,m}o(t)x(t)E\{(d(t)-C(t)\Sigma)|\mathbf{X}(t), \sqrt{\mathbf{o}}(t)\}\\
        & = \sum_{l,m}o(t)x(t)(\bar{d}-c(t)\Sigma)\\
        & \le -\sum_{l,m}\epsilon o(t)x(t)
    \end{split}
\end{equation}

Since $o(l,m)>=1$, 
\begin{equation}\label{eq:alpha1bound}
E\{\alpha_1|\mathbf{X}(t), \sqrt{\mathbf{o}}(t)\} \le -\sum_{l,m}\epsilon x(t)
\end{equation}

Until now, it is proved that $\beta$ and $\alpha_2$ are upper bounded by a constant, and the conditional expectation of $\alpha_1$ is bounded by a form shown in Eq. \eqref{eq:alpha1bound}. Then, 
\begin{equation}
\begin{split}
    & E\{(|\sqrt{\mathbf{o}}(t)\odot\mathbf{X}(t+1)|^2-|\sqrt{\mathbf{o}}(t)\odot\mathbf{X}(t)|^2)|\mathbf{X}(t), \sqrt{\mathbf{o}}(t)\}\\
    = & E\{\mathbf{o}(t)\}^TE\{(\mathbf{X}(t+1)\odot\mathbf{X}(t+1)-\mathbf{X}(t)\odot\mathbf{X}(t)|\mathbf{X}(t), \sqrt{\mathbf{o}}(t)\}\\
    \le & k-\epsilon|\mathbf{X}(t)|
\end{split}
\end{equation}

Then, taking expectation for both side and sum over $t=1, 2, 3, ..., T$ obtains:
\begin{equation}\label{eq:laststep}
    \begin{split}
        \frac{\sum_{t=1}^{T}E\{|\mathbf{X}(t)|\}}{T}\le \frac{k}{\epsilon}+\frac{1}{\epsilon T}E\{\mathbf{o}\}^TE\{\mathbf{X}(1)\odot\mathbf{X}(1)\}
    \end{split}
\end{equation}

Since it is assumed that the expectation of average occupancy, $E\{\mathbf{o}\}$, is fixed, the right-hand side of Eq. \eqref{eq:laststep} is a constant. Eq. \eqref{eq:laststep} indicates the average queue length in the network under the control of OCC-MP is upper bounded.
\end{proof}

Although the maximum stability for a large network is not established for the original feasible region, defined by Q-MP, the simulation results in Section \ref{sec:results} demonstrate that the feasible region of the proposed OCC-MP is larger than that of a rule-based MP algorithm that provides TSP \citep{xu2022a}, which was proved to be stable for a smaller feasible region. 

\section{Simulation Setup}\label{sec:simset}
Simulation tests in the AIMSUN micro-simulation platform were performed to evaluate the effectiveness of the proposed OCC-MP control strategy. AIMSUN was chosen for its ability to realistically model traffic dynamics, such as congestion propagation, queue spillbacks, vehicle routing, and driving behavior\citep{barcel2005a}.
\subsection{Network setup}
Simulation tests were carried out on an 8x8 grid network shown in Figure \ref{fig:figure_2}. While real-world street networks may not perfectly align with a square grid pattern, many urban networks exhibit grid-like characteristics. Previous studies that have simulated grid traffic networks reported results that can be generalized to more realistic networks \citep{bayrak2023a, knoop2012a, mazloumian2010a, ortigosa2019a, ortigosa2014a}. Road segments were categorized as arterials with mixed use that accommodate both private vehicles and buses. All road segments were assumed to have bi-directional traffic flow, with three travel lanes in each direction serving dedicated right, through and left movements at an approach (Figure \ref{fig:figure_3}). Each segment was 200 meters long with a saturation flow of 1800 vehicles per hour and a posted speed limit of 50 km/h. Within the network, all 64 intersections were signalized and consist of four phases, where through and right movements are served by one phase while left turning movements have a separate phase (Figure \ref{fig:figure_3}). 

Private vehicle origins and destinations were strategically positioned at the 32 O-D centroids located along the network’s perimeter. A symmetric demand pattern was adopted, in which the demand at North-South origin centroids was assumed to be twice the demand at East-West origin centroids. A two-hour peak period was simulated, consisting of gradually increasing private vehicle demand in three 30 minute intervals, followed by a decrease in the last 30 minutes. This was then followed by a one-hour cooldown period. Two demand scenarios were tested: a high demand scenario with an average of 32,256 vehicles entering the network and a low demand scenario with an average of 23,040 entering vehicles (Figure \ref{fig:figure_4}). To model the private vehicle routing behavior, the study utilized the stochastic c-logit route choice model integrated within AIMSUN. This routing model aimed to replicate a stochastic user-equilibrium routing solution, where vehicles select routes at the beginning of a trip to minimize travel times. 

\begin{figure}[!ht]
    \centering
        \includegraphics[trim={0.5in 0in 0.5in 0in}, clip, width=5in]{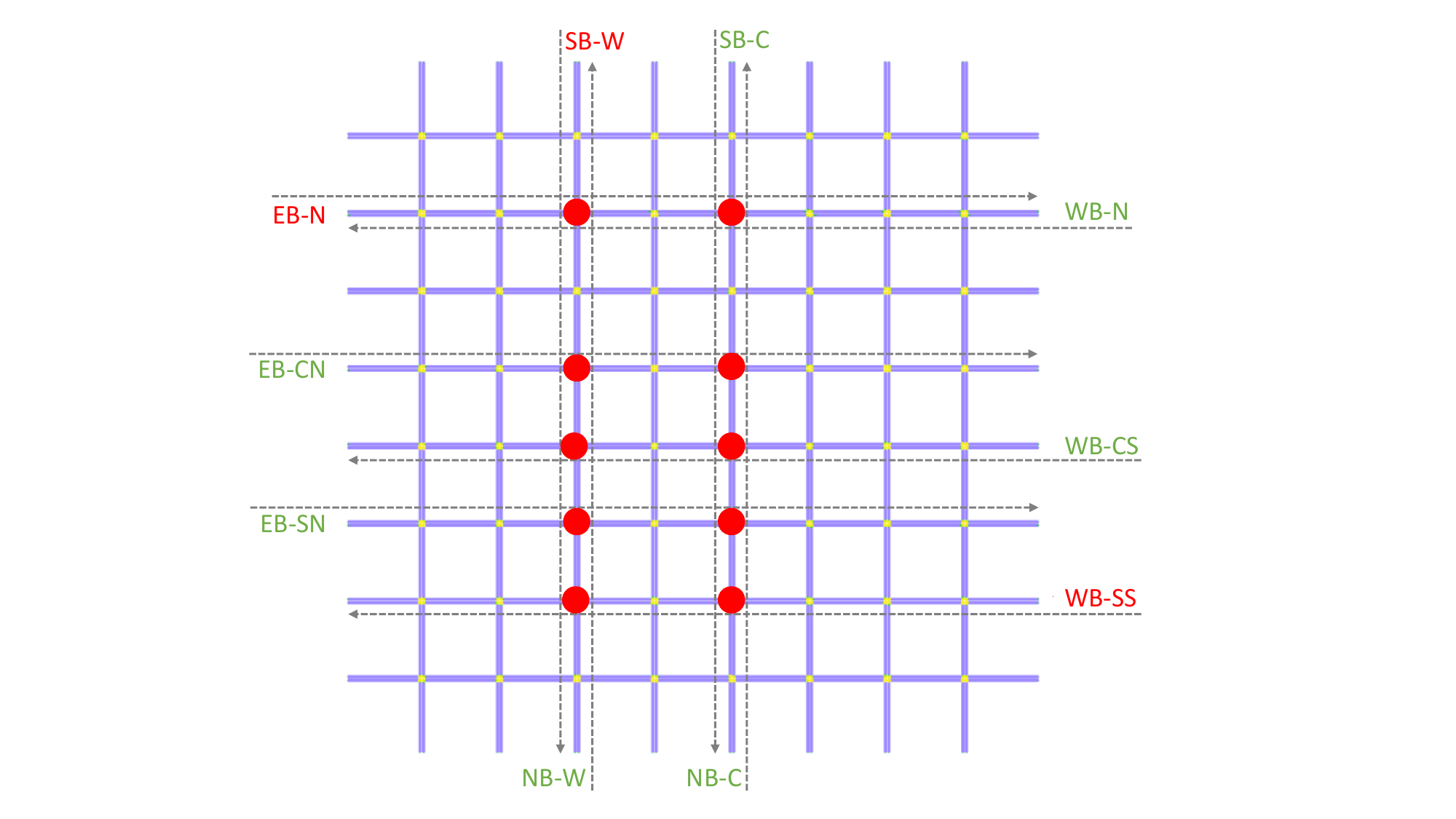}
    	\centering
    	\caption{Network configuration.}
    	\label{fig:figure_2}
\end{figure}
\begin{figure}[!ht]
    \centering
        \includegraphics[trim={0.5in 3.5in 0.5in 1.5in}, clip, width=5in]{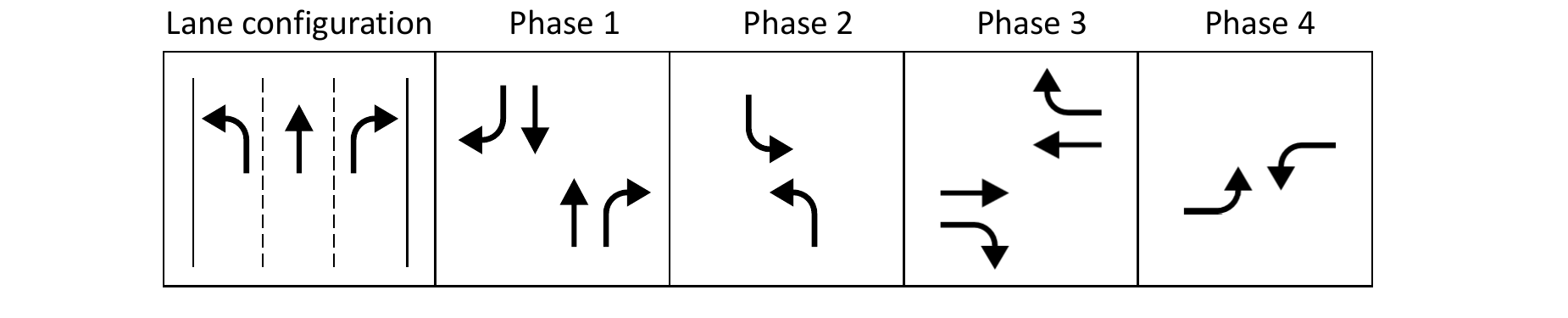}
    	\centering
    	\caption{Lane configuration and phases.}
    	\label{fig:figure_3}
\end{figure}

\begin{figure}[!ht]
    \centering
        \includegraphics[trim={0in 1in 0in 2in}, clip, width=6in]{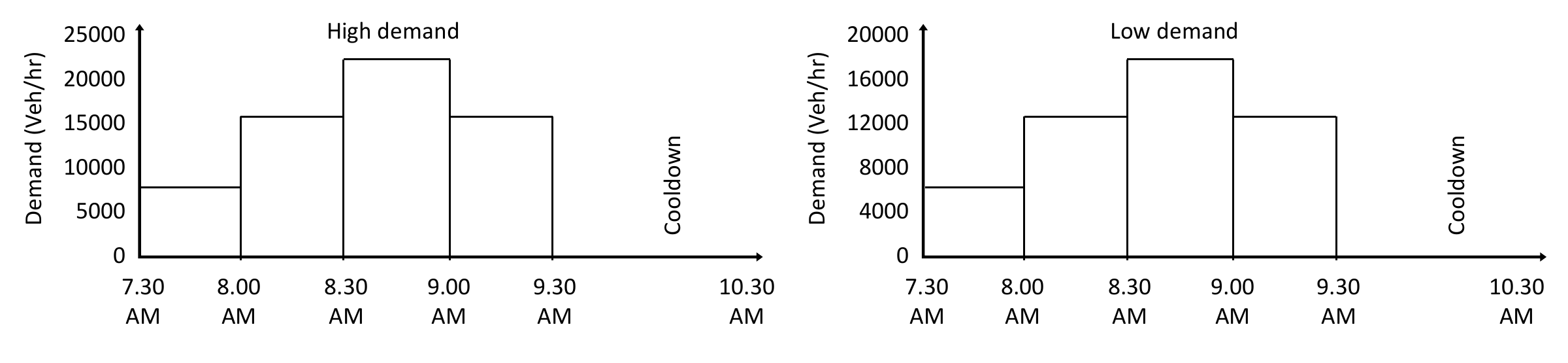}
    	\centering
    	\caption{Time varied demand.}
    	\label{fig:figure_4}
\end{figure}

The simulated network consists of ten bus routes, which include a combination of bi-directional and unidirectional routes as shown in Figure \ref{fig:figure_2}. Six of the routes operate between three pairs of O-D centroids: (SB-W, NB-W), (SB-C, NB-C), and (EB-N, WB-N). The remaining four routes -- EB-CN, WB-CS, EB-SN, and WB-SS -- are unidirectional, meaning that buses travel in only one direction. Within the network, there are seven high-occupancy routes indicated by green labels and three low-occupancy routes (marked with red labels) in Figure \ref{fig:figure_2}. The study simulates two different levels of passenger demand. In the high passenger demand scenario, the high occupancy routes are assigned an average occupancy of 50 passengers per bus, while the low occupancy routes have an average occupancy of 25 passengers per bus. In contrast, the low passenger demand scenario assumes the high occupancy routes have an average occupancy of 12 passengers per bus, while the low occupancy routes have an average occupancy of 3 passengers per bus. Two levels of bus frequencies were also simulated where the headway between buses in the high frequency case was 2 minutes on average, while the low frequency case was simulated with an average headway of 5 minutes between consecutive bus arrivals on each route. The network includes ten conflict intersections denoted by red circles in Figure \ref{fig:figure_2} where buses may compete for right of way at the same time. As conflicting movements are served by different phases, the phase with a higher pressure while considering the average upstream occupancies will be served using the OCC-MP. 

The performance of OCC-MP is compared with two other baseline policies. The first is the original Q-MP policy. The second baseline policy is an MP-based strategy that incorporates a rule-based transit signal priority, referred to as RB-MP. The RB-MP seeks to mimic the strategy proposed in \citep{xu2022a}. Specifically, it follows the MP framework to determine the weights, calculate the pressure and assign the right of way to the phase with the maximum pressure similar to the three steps shown in Section 2.1. However, the difference is that the algorithm uses a set of constraints that ensures unconditional priority to buses if they are detected within the queue of vehicles on a movement. To enforce this, the weight of a movement is increased by a sufficiently large constant, $M$, if a bus is detected which increases the pressure of the phase serving that movement. Therefore, at the end of the update interval, this phase is activated. However, if there are no buses present, RB-MP functions similar to Q-MP. In the case of multiple buses approaching an intersection on competing phases i.e., both of their movements cannot be accommodated by the same phase to avoid conflicts, the right of way is assigned to the phase that maximizes the pressure of private vehicles i.e., the phase with more queued vehicles is served. A pseudocode of the algorithm behind RB-MP is provided below.

% \begin{algorithm}[H]
% \caption{\hl{RB-MP pseudocode}}
% \SetAlgoLined
% \#calculate weight of all movements at intersection $i$

% \If{time == update interval}{

%     \For{each movement $(l,m) \in L_i^\phi$}{
%         \State $w_{\text{RB-MP}}(l,m) \leftarrow x(l,m) - \sum_{n \in D(m)} x(m,n) \times r(m,n)$\;
%         \If{$x_{\text{bus}}(l,m) > 0$}{
%             \State $w_{\text{RB-MP}}(l,m) \mathrel{+}= M$\;
%         }
%     }
% }
%     \#calculate pressure of all phases 
%     \For{each phase $\phi \in \Phi_i$}{
%         \State $P_{\text{RB-MP}}^\phi \leftarrow \sum_{(l,m) \in L_i^\phi} w_{\text{RB-MP}}(l,m) \times c(l,m)$\;
%     }

%     \#activate phase with maximum pressure
    
%         \State $S_{\text{RB-MP}} \leftarrow \arg \max_{\phi \in \Phi_i} P_{\text{RB-MP}}^\phi$\;
%         \State \Return{$S_{\text{RB-MP}}$}\;

% \end{algorithm}

% \begin{tcolorbox}[colback=yellow!80, colframe=white]
\begin{algorithm}
\caption{RB-MP pseudocode}
\begin{algorithmic}
\State \# calculate weight of all movements at intersection $i$
\If{time == update interval}
    \For{each movement $(l,m) \in L_i^\phi$}
        \State $w_{\text{RB-MP}}(l,m) \leftarrow x(l,m) - \sum_{n \in D(m)} x(m,n) \times r(m,n)$
        \If{$x_{\text{bus}}(l,m) > 0$}
            \State $w_{\text{RB-MP}}(l,m) \mathrel{+}= M$
        \EndIf
    \EndFor

\State \# calculate pressure of all phases
\For{each phase $\phi \in \Phi_i$}
    \State $P_{\text{RB-MP}}^\phi \leftarrow \sum_{(l,m) \in L_i^\phi} w_{\text{RB-MP}}(l,m) \times c(l,m)$
\EndFor

\State \# activate phase with maximum pressure
\State $S_{\text{RB-MP}} \leftarrow \arg \max_{\phi \in \Phi_i} P_{\text{RB-MP}}^\phi$
\EndIf
\State \Return{$S_{\text{RB-MP}}$}
\end{algorithmic}
\end{algorithm}
% \end{tcolorbox}

To ensure consistency in the evaluation, all three MP control policies adopt a signal update interval of 10 seconds.

\subsection{Scenario setup}
Different scenarios were simulated to understand the benefits and potential application of the proposed control policy. Scenario 1 assumes the system has no knowledge of private vehicle passenger occupancy. In this case, an average of 1.5 persons per private vehicle, as reported in \citep{schrank2021a}, is assumed. However, the exact bus occupancies are assumed to be available from APC data. This scenario is further extended to test the resilience of the policy due to errors in the reported bus occupancies from APC. To test this, a random error term was added to the occupancies of buses reported to the controller after crossing every intersection. The error term was assumed to have a mean of 0 and standard deviation of $\sigma \%$ of the true occupancy at each intersection and additive over every intersection. Varying values of $\sigma$ from 0 to 40 were tested to understand the impact of discrepancies in transmitted APC data and how it impacts the network performance. 

Scenario 2 assumes individual vehicle occupancies are available to the signal controller, as would be possible in a CV environment. This means that the system has complete knowledge of both private vehicle and bus occupancies, which is leveraged by the OCC-MP policy to dynamically calculate weights of movements based on their occupancy levels. Within the simulation, each private vehicle entering the network was randomly assigned an occupancy based on a probability distribution (shown in Table \ref{table:table_1}) such that the average private vehicle occupancy was approximately 1.5. In a fully connected environment, it is assumed that all vehicles are equipped with CV technology that is leveraged by the MP policies to accurately measure the queue lengths and (or) occupancies. However, a network may have mixed flow comprising of both connected and non-connected vehicles. Therefore, a partially connected environment was also considered in which the CV penetration rate was varied from 20\% to 100\% to understand how the policies perform when limited information is available. 

% Table generated by Excel2LaTeX from sheet 'Sheet1'
\begin{table}[htbp]
  \centering
  \caption{Probability distribution of private vehicle occupancy}\label{table:table_1}
    \begin{tabular}{cc}
    \toprule
    Occupancy & Probability \\
    \midrule
    1     & 0.7 \\
    2     & 0.125 \\
    3     & 0.1 \\
    4     & 0.05 \\
    5     & 0.025 \\
    \bottomrule
    \bottomrule
    \end{tabular}%
\end{table}%

Within both scenarios, a total of 8 sub-scenarios were simulated, each representing a different combination of private vehicle demand, bus occupancy, and bus frequency.  The private vehicle demand represents the overall traffic flow in the network, while the bus occupancy and frequency directly affect the bus operations and interactions with other vehicles. By considering both high and low occupancy levels and varying bus headways, the impact of different bus configurations on the performance of the policies can be analyzed. Table \ref{table:table_2} contains the configuration of the sub-scenarios. Each sub-scenario was simulated with 10 different random seeds to account for stochasticity and ensure robust analysis. 

% Table generated by Excel2LaTeX from sheet 'Sheet1'
\begin{table}[htbp]
  \centering
  \caption{Summary of sub-scenarios}\label{table:table_2}
    \begin{tabular}{cp{11.275em}p{10.32em}p{7.41em}}
    \toprule
    \multicolumn{1}{c}{\multirow{2}[2]{*}{Sub-scenario}} & \multirow{2}[2]{*}{Private vehicle demand} & \multirow{2}[2]{*}{Bus passenger demand} & \multirow{2}[2]{*}{Bus frequency} \\
          & \multicolumn{1}{c}{} & \multicolumn{1}{c}{} & \multicolumn{1}{c}{} \\
    \midrule
    1     & Low   & High  & High \\
    2     & Low   & High  & Low \\
    3     & Low   & Low   & High \\
    4     & Low   & Low   & Low \\
    5     & High  & High  & High \\
    6     & High  & High  & Low \\
    7     & High  & Low   & High \\
    8     & High  & Low   & Low \\
    \bottomrule
    \bottomrule
    \end{tabular}%
  \label{tab:addlabel}%
\end{table}%

\section{Results}\label{sec:results}
\subsection{Scenario 1: Non-connected vehicle environment } 
To quantify the level of congestion in the network, average network speeds under the Q-MP policy across the eight sub-scenarios are provided in Figure \ref{fig:figure_5}. The lines represent the mean value across all ten simulation iterations, while the shaded areas represent the confidence interval with $\pm$ one standard error of observed values. Sub-scenarios with similar private vehicle and bus demands but different occupancies were grouped together as the Q-MP does not consider vehicle occupancies. As expected, the average network speeds drop drastically from about 25 km/h to just under 20km/h due to the increase in private vehicle demand. A change in bus headway from 5 minutes to 2 minutes also results in a slight decrease in network speeds, as expected. 

\begin{figure}[!ht]
    \centering
        \includegraphics[trim={0.1in 0.3in 0.5in 0.5in}, clip, width=4in]{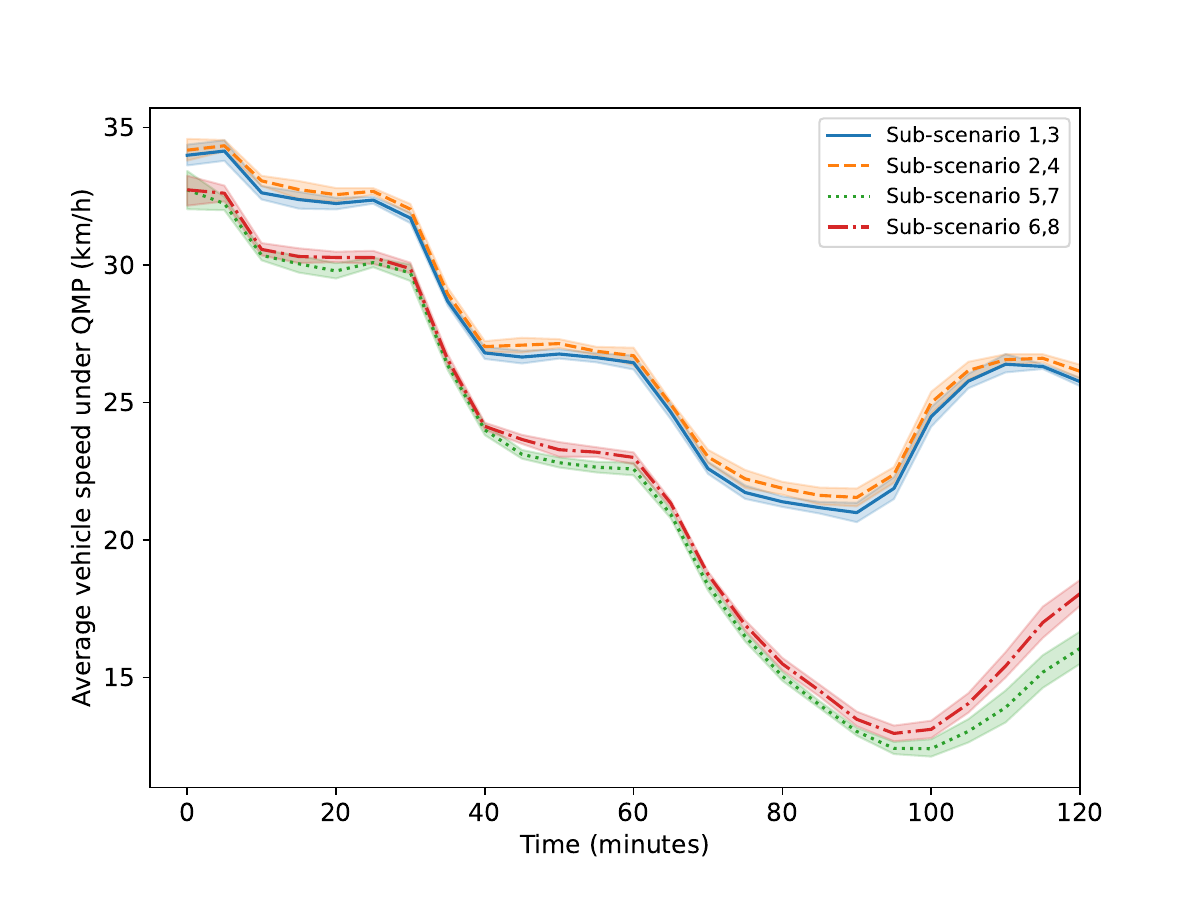}
    	\centering
    	\caption{Average vehicle speeds under Q-MP across sub-scenarios.}
    	\label{fig:figure_5}
\end{figure}

\subsubsection{Vehicle travel time comparison}
First, tests were conducted for $\sigma=0$, which indicate that APC data from buses is perfectly accurate. Figure \ref{fig:figure_6}  presents the percentage change in vehicle travel time (VTT) of private vehicles under OCC-MP and RB-MP strategies, relative to the Q-MP. Standard errors across the ten simulation iterations are shown using whiskers. It is evident that integrating TSP using either the RB-MP or OCC-MP policies results in an increase in VTT of private vehicles over Q-MP. However, OCC-MP has a lower negative impact on private vehicles compared to RB-MP across all sub-scenarios. 

\begin{figure}[!ht]
    \centering
        \includegraphics[trim={0.4in 0.2in 0.5in 0.5in}, clip, width=4in]{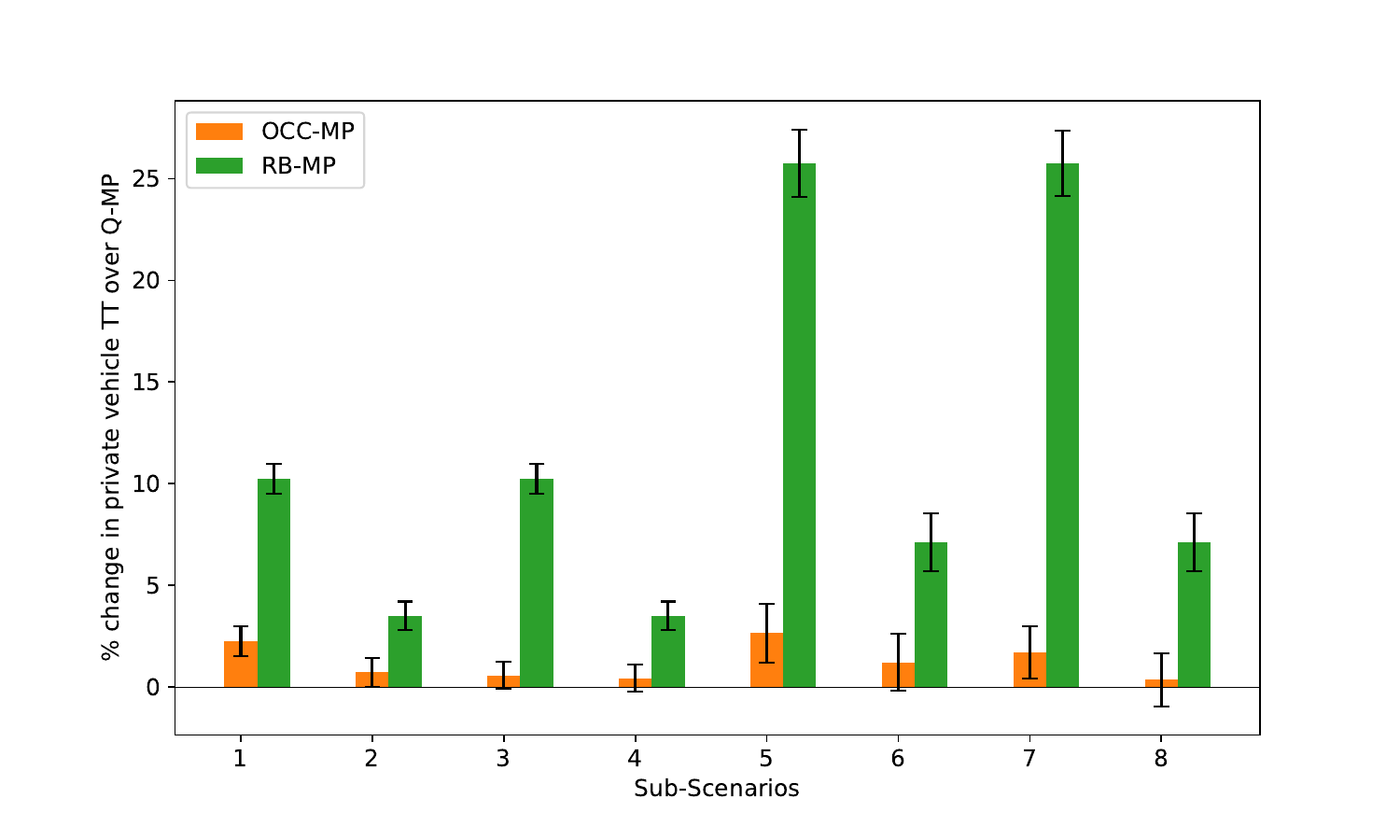}
    	\centering
    	\caption{Percentage change in private vehicle travel time over Q-MP.}
    	\label{fig:figure_6}
\end{figure}

It is expected that OCC-MP will behave similar to Q-MP when few buses are present;   Sub-scenarios 3, 4, 6 and 8 confirm this as the confidence intervals designated by the standard errors contain 0, which suggests no statistically significant difference between the performance of OCC-MP and Q-MP. Sub-scenario 8, in which the demand for private vehicles was high and buses had a lower frequency and lower passenger occupancies, resulted in only 0.36\% increase in private vehicle travel time. This can be attributed to the fact that there were fewer buses with lower occupancies in the network, leading OCC-MP to select similar phases to Q-MP. The maximum percentage change in VTT from OCC-MP is 2.64\% and observed for Sub-scenario 5, which has high bus frequency and passenger occupancy. In this sub-scenario, OCC-MP frequently selected phases to prioritize the movement of buses carrying more passengers. Note that RB-MP is not impacted by bus occupancies; thus, the same average VTTs were observed across pairs of sub-scenarios with the same vehicular demand. Overall, RB-MP resulted in statistically significant increases in VTT, ranging from 3.50\% to 25.75\%. Interestingly, the best performance of RB-MP is still worse than the worst performance of OCC-MP. This can be attributed to the fact that OCC-MP may select phases in which private vehicle queues are large, even when buses are present. The results highlight the effectiveness of the OCC-MP strategy in mitigating the negative impact on private vehicle travel times when compared to RB-MP.
To further demonstrate the difference in impacts to private vehicles across the three control strategies, Figure \ref{fig:figure_7} plots the VTT of private vehicles per minute. Notice that the VTT continues to grow under the RB-MP policy despite the reduction in vehicle demand at the 90 minute mark. By contrast, the Q-MP and OCC-MP policies show VTT trends that reflect the level of vehicle demand. This finding is indicative of queue spillback phenomenon due to growing vehicle queues in the RB-MP policy. 

\begin{figure}[!ht]
    \centering
        \includegraphics[trim={0.2in 0.1in 0.5in 0.4in}, clip, width=3.5in]{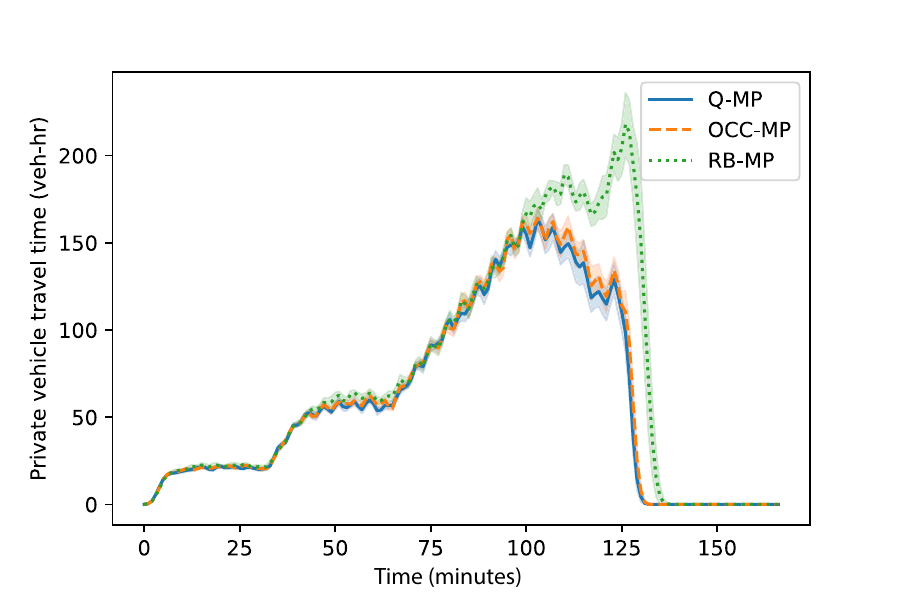}
    	\centering
    	\caption{Total travel time of private vehicles (Sub-scenario 7).}
    	\label{fig:figure_7}
\end{figure}

Figure \ref{fig:figure_8} illustrates the percent change in bus VTT under both OCC-MP and RB-MP compared to Q-MP. The results show that both strategies lead to a reduction in bus travel times compared to the baseline Q-MP strategy across all sub-scenarios, and all improvements are statistically significant. However, the magnitude of the improvement varies between the two strategies. As expected, RB-MP consistently outperforms OCC-MP and provides larger reductions in bus VTT since it provides full priority to buses. OCC-MP achieves an average reduction in bus VTT of 14.5\% when buses have higher occupancies (Sub-Scenarios 1, 2, 5 and 6) and 7.5\% when buses are less crowded (Sub-Scenarios 3, 4, 7 and 8). This is expected as weights of bus movements are lower when there are fewer passengers onboard. Conversely, RB-MP shows little variation between the different sub-scenarios and achieves a nearly consistent average reduction of approximately 30\% across all sub-scenarios. 

\begin{figure}[!ht]
    \centering
        \includegraphics[trim={0.4in 0.2in 0.5in 0.5in}, clip, width=4in]{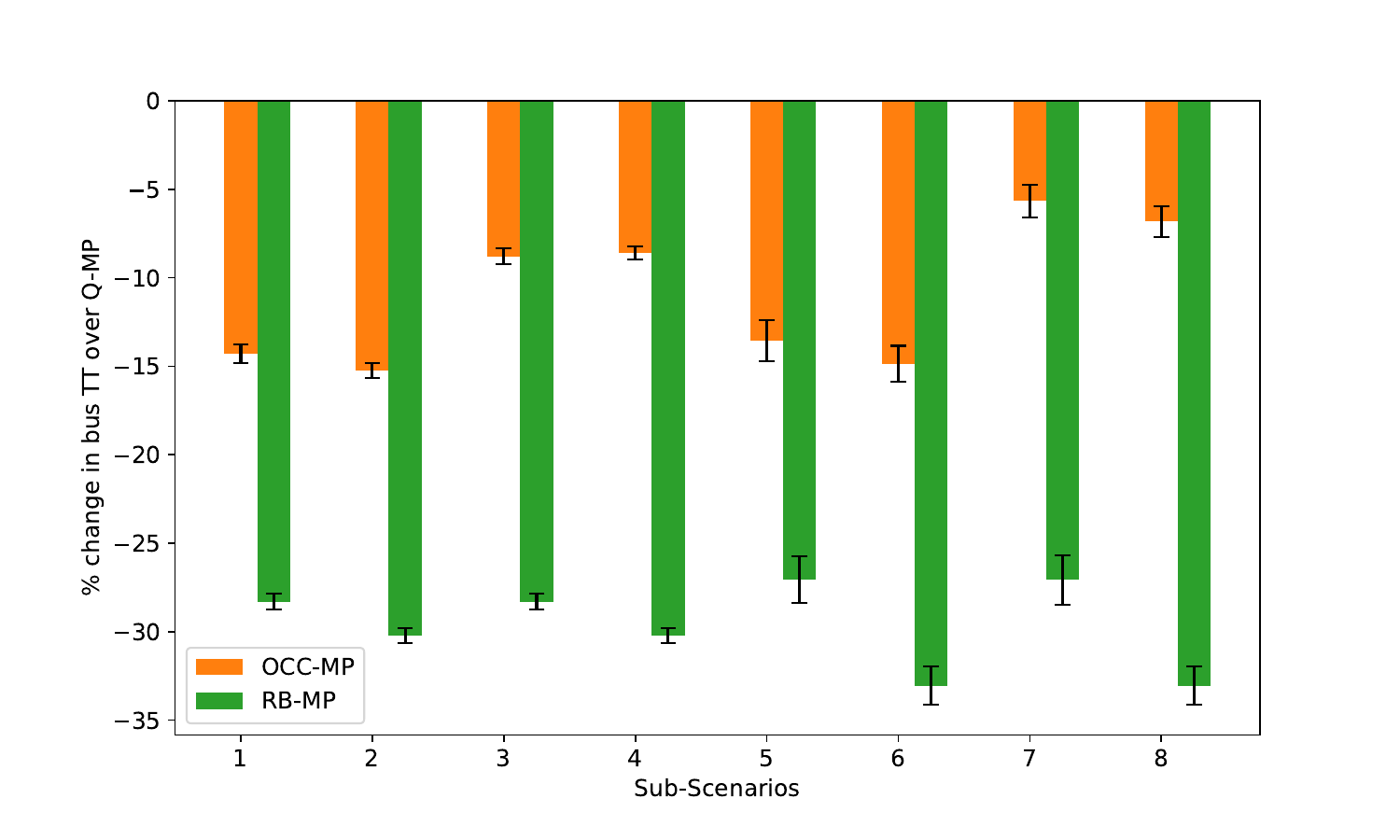}
    	\centering
    	\caption{Percentage change in bus travel time over Q-MP.}
    	\label{fig:figure_8}
\end{figure}

\subsubsection{Passenger travel time comparison}
Figure \ref{fig:figure_9} compares the passenger travel times (PTT) of both OCC-MP and RB-MP strategies against Q-MP for all sub-scenarios. The results reveal that OCC-MP yields lower total PTT compared to Q-MP in 6 out of 8 sub-scenarios, indicating a positive impact on overall passenger mobility. The improvements range from approximately 0.1\% to 3.6\% on average, mostly in scenarios with higher bus occupancies. Maximum benefits were observed in Sub-scenario 1, suggesting OCC-MP best reduces overall passenger travel times when there are relatively fewer private vehicles and more buses carrying more passengers. Sub-scenarios 7 and 8 saw a nominal increase in passenger travel times by 0.9\% and 0.1\% respectively over Q-MP; however, the confidence intervals denoted by the standard errors reveal these increases are note statistically significant. Conversely, RB-MP shows mixed results with some sub-scenarios exhibiting improvements and others significant negative effects on PTT over Q-MP. Sub-scenarios 1 and 2 show improvements of 3.5\% and 2.3\% respectively, which were similar to OCC-MP in terms of PTT improvements. However, in the other sub-scenarios, RB-MP results in an increase in PTT ranging from approximately 1.9\% up to 21.2\% in Sub-Scenario 7. Previously it was found that Sub-Scenario 7 also corresponds to the highest increase in VTT of private vehicles and lowest bus VTT improvement compared to Q-MP. This highlights the superior performance of OCC-MP in balancing VTT of private vehicles and buses, ultimately resulting in lower passenger travel times.

\begin{figure}[!ht]
    \centering
        \includegraphics[trim={0.4in 0.2in 0.5in 0.5in}, clip, width=4in]{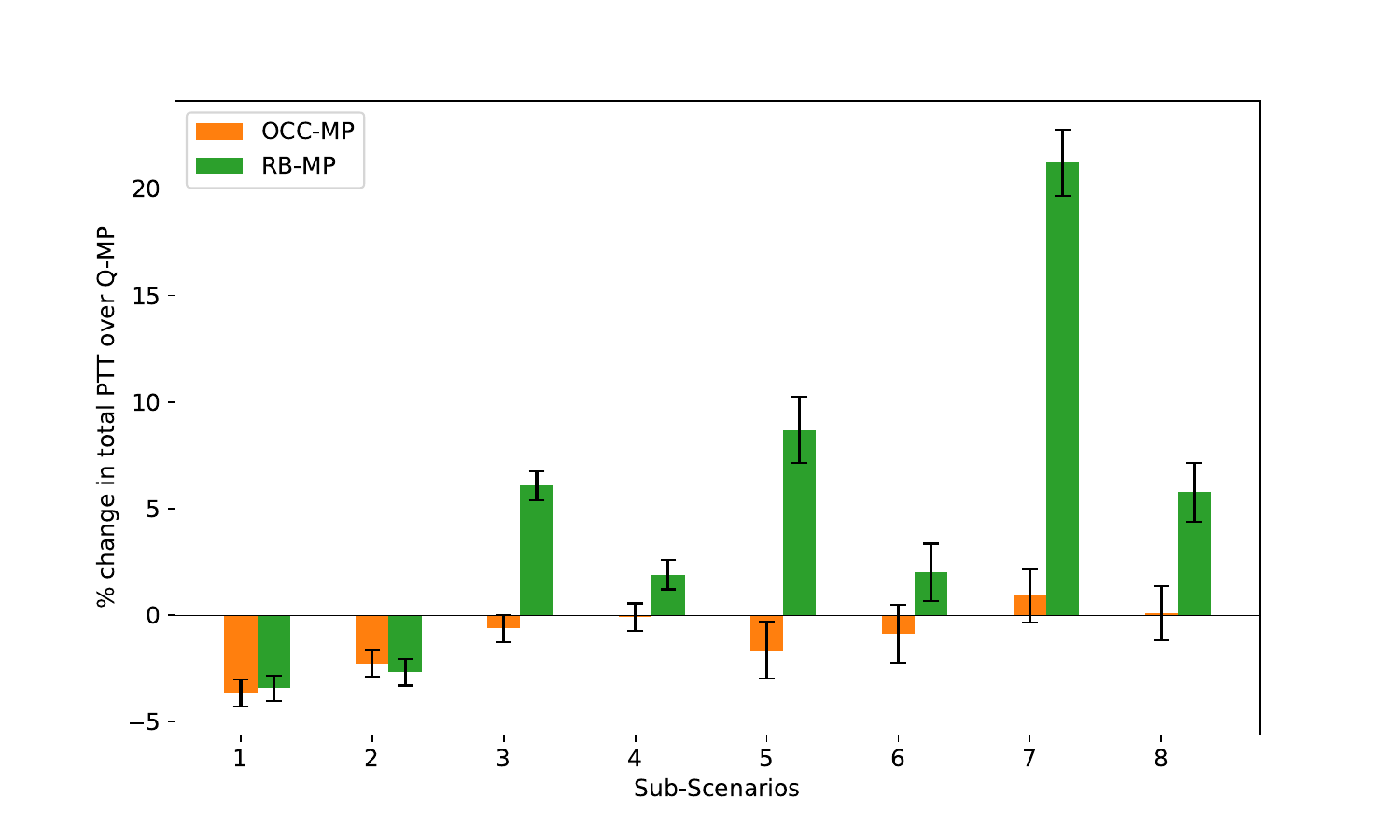}
    	\centering
    	\caption{Percent change in total passenger travel time over Q-MP.}
    	\label{fig:figure_9}
\end{figure}

\subsubsection{Network stability comparison}
Sub-section \ref{subsec:max_stability} of this paper analytically proves that OCC-MP possesses maximum stability at an isolated intersection. To compare the stability property of the three control algorithms used in this study on an urban network, simulation results of the average accumulation (i.e., number of vehicles in the network) over time across the sub-scenarios is shown in Figure \ref{fig:figure_10}. Vehicular demand is relatively low for the first 30 minutes of the simulation, where it can be observed that the average accumulation is similar for all three control policies. However, as the demand increases, the network experiences a higher average accumulation for RB-MP than Q-MP or OCC-MP. Between 60-90 minutes, the vehicular demand is highest and represents an unstable condition for all three control policies specifically in sub-scenarios 5-8 as the average network accumulation keeps growing over time. 
 
 As Q-MP has the largest stable region, Figure\ref{fig:figure_11} compares the difference in accumulation for OCC-MP and RB-MP to further investigate how the stable region differs from Q-MP for different sub-scenarios. It is evident that OCC-MP performs similar to Q-MP for most of the sub-scenarios as the shaded regions include 0. The accumulation under OCC-MP is higher than Q-MP in sub-scenarios 1 and 5 between 60-120 minutes of the simulation. This is expected as the sudden spike in private vehicle demand leads to an increase in accumulation while buses have priority. The other sub-scenarios are not associated with a significant difference over Q-MP. However, for the same private vehicle demand patterns, RB-MP has a significantly higher average accumulation compared to OCC-MP when buses arrive at higher frequencies as seen in Sub-scenarios 1, 3 and 5, 7. This suggests that the RB-MP has a smaller stable region compared to Q-MP and OCC-MP while OCC-MP exhibits a similar stable region as Q-MP in an urban network for private vehicles, even while prioritizing buses. 

\begin{figure}[!ht]
    \centering
    \begin{subfigure}{0.45\textwidth}
        \includegraphics[width=2.7in]{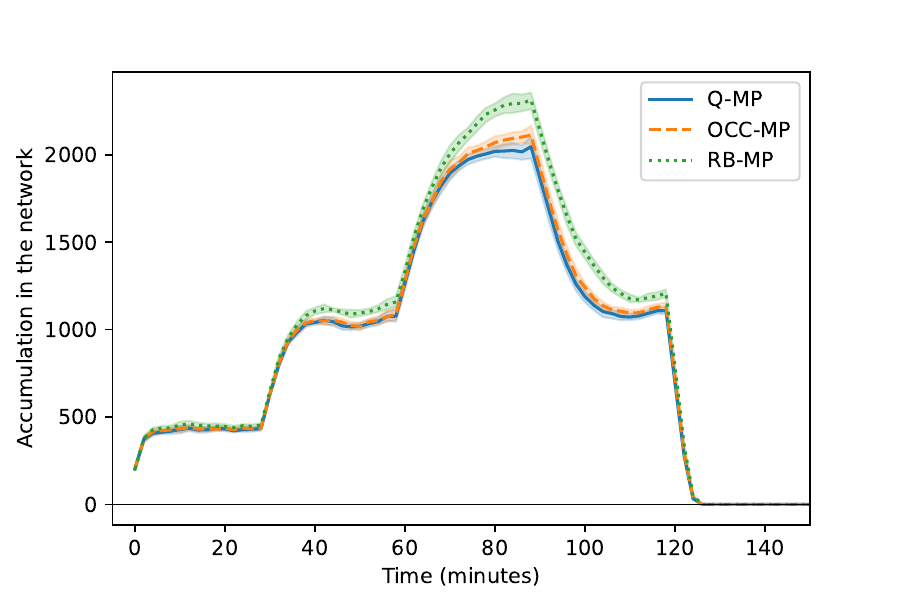}
    	\centering
    	\caption{Sub-scenario 1}
    	\label{fig:figure_10_0}
    \end{subfigure}
    \begin{subfigure}{0.45\textwidth}
        \includegraphics[width=2.7in]{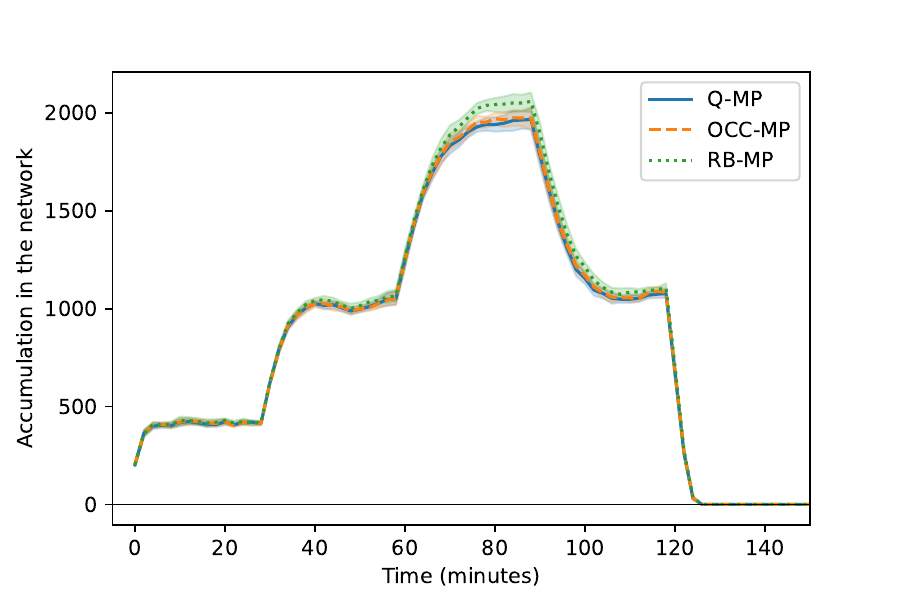}
    	\centering
    	\caption{Sub-scenario 2}
    	\label{fig:figure_10_1}
    \end{subfigure}

    \begin{subfigure}{0.45\textwidth}
        \includegraphics[width=2.7in]{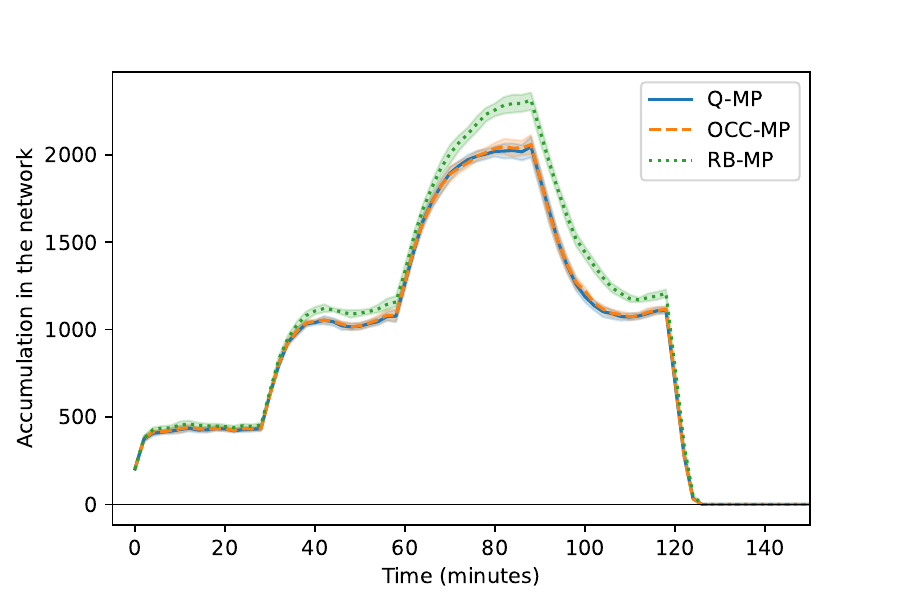}
    	\centering
    	\caption{Sub-scenario 3}
    	\label{fig:figure_10_2}
    \end{subfigure}
    \begin{subfigure}{0.45\textwidth}
        \includegraphics[width=2.7in]{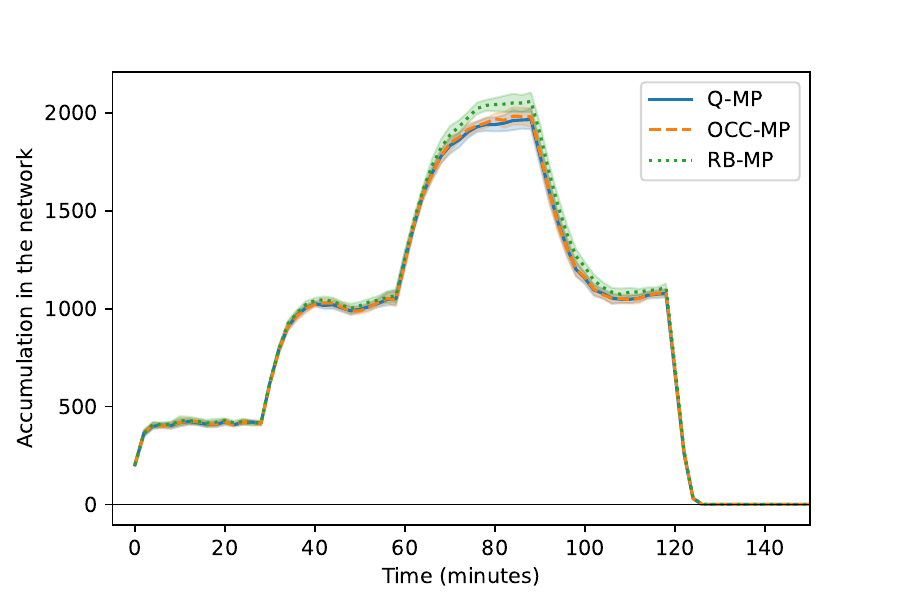}
    	\centering
    	\caption{Sub-scenario 4}
    	\label{fig:figure_10_3}
    \end{subfigure}

    \begin{subfigure}{0.45\textwidth}
        \includegraphics[width=2.7in]{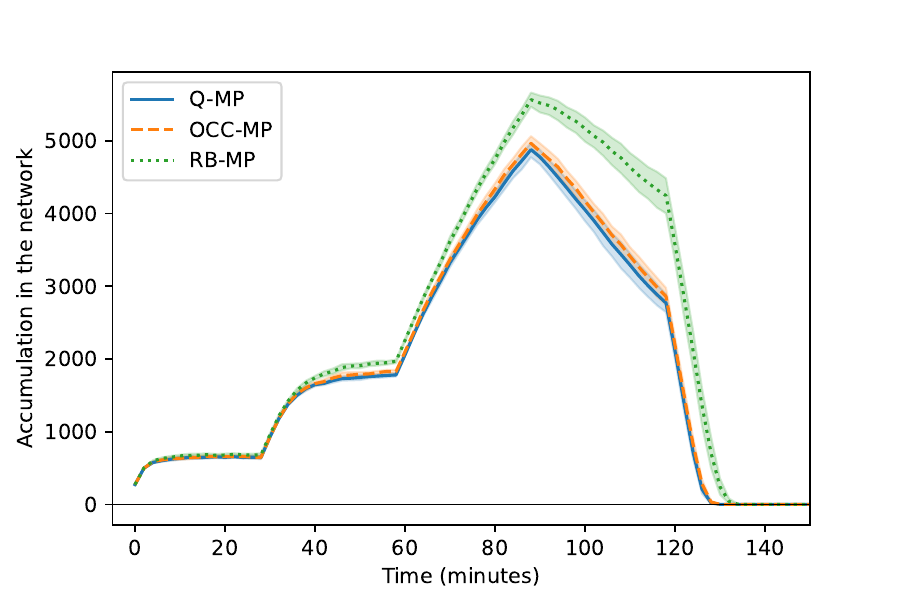}
    	\centering
    	\caption{Sub-scenario 5}
    	\label{fig:figure_10_4}
    \end{subfigure}
    \begin{subfigure}{0.45\textwidth}
        \includegraphics[width=2.7in]{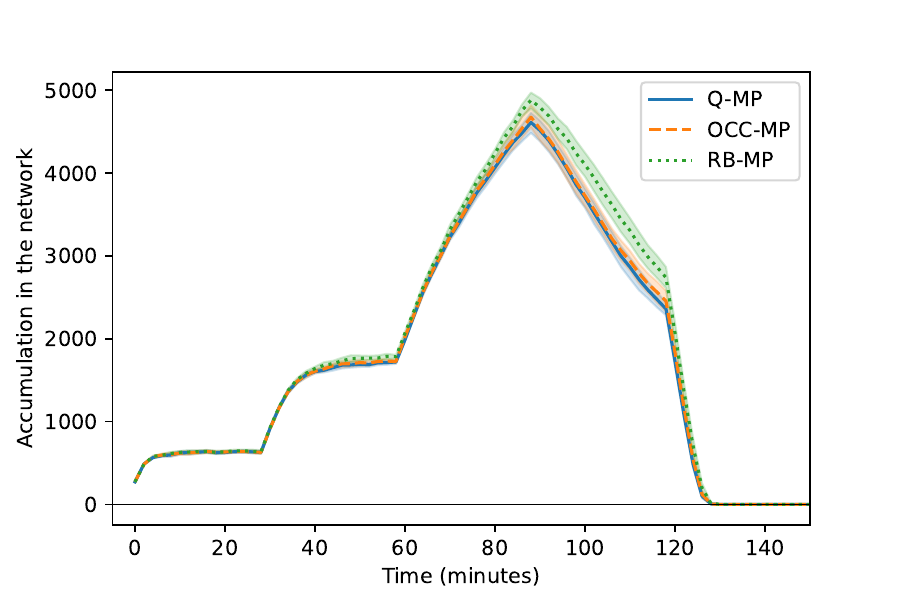}
    	\centering
    	\caption{Sub-scenario 6}
    	\label{fig:figure_10_5}
    \end{subfigure}

    \begin{subfigure}{0.45\textwidth}
        \includegraphics[width=2.7in]{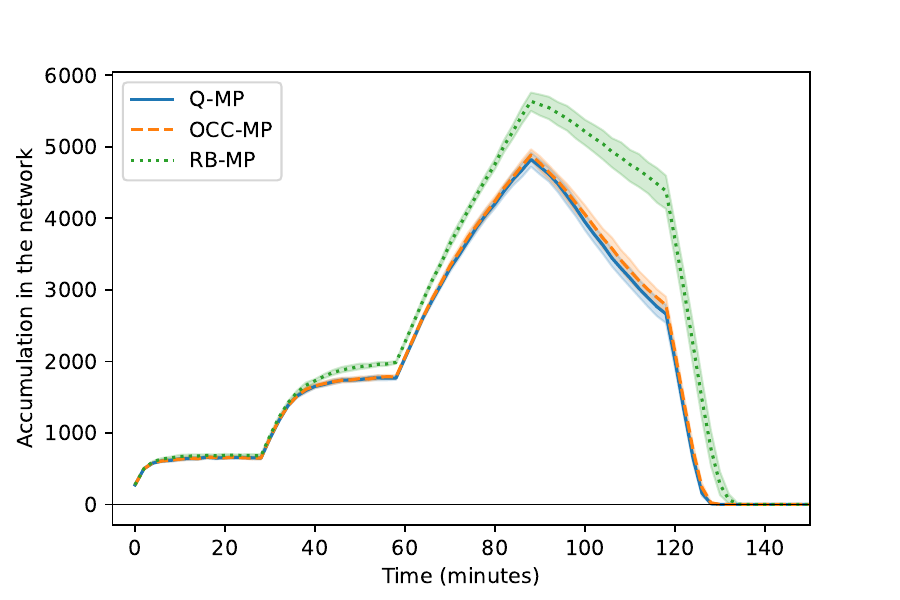}
    	\centering
    	\caption{Sub-scenario 7}
    	\label{fig:figure_10_6}
    \end{subfigure}
    \begin{subfigure}{0.45\textwidth}
        \includegraphics[width=2.7in]{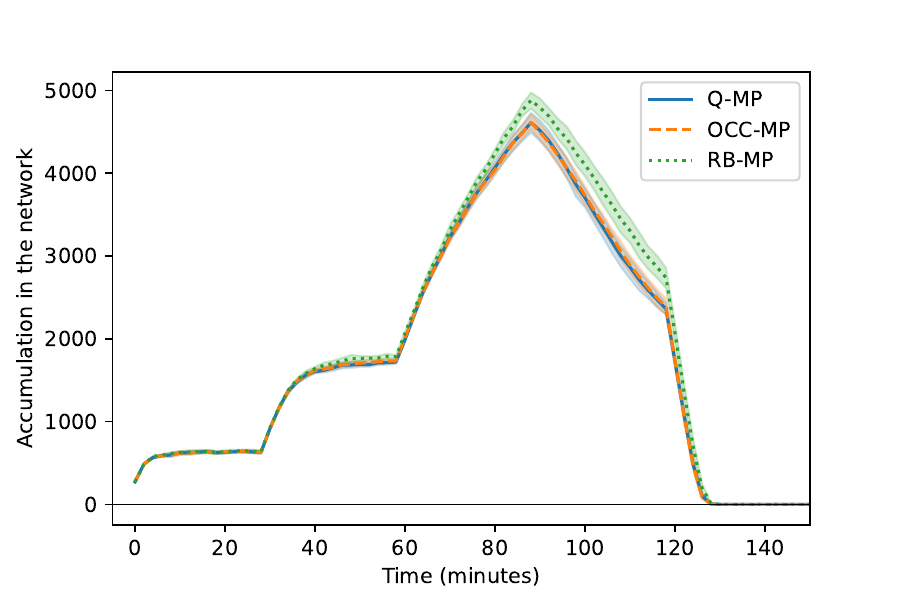}
    	\centering
    	\caption{Sub-scenario 8}
    	\label{fig:figure_10_7}
    \end{subfigure}
    
    \caption{Evolution of average accumulation in the network under different control policies for Sub-scenarios 1-8.}
    \label{fig:figure_10}
\end{figure}

\begin{figure}[!ht]
    \centering
    \begin{subfigure}{0.45\textwidth}
        \includegraphics[width=2.7in]{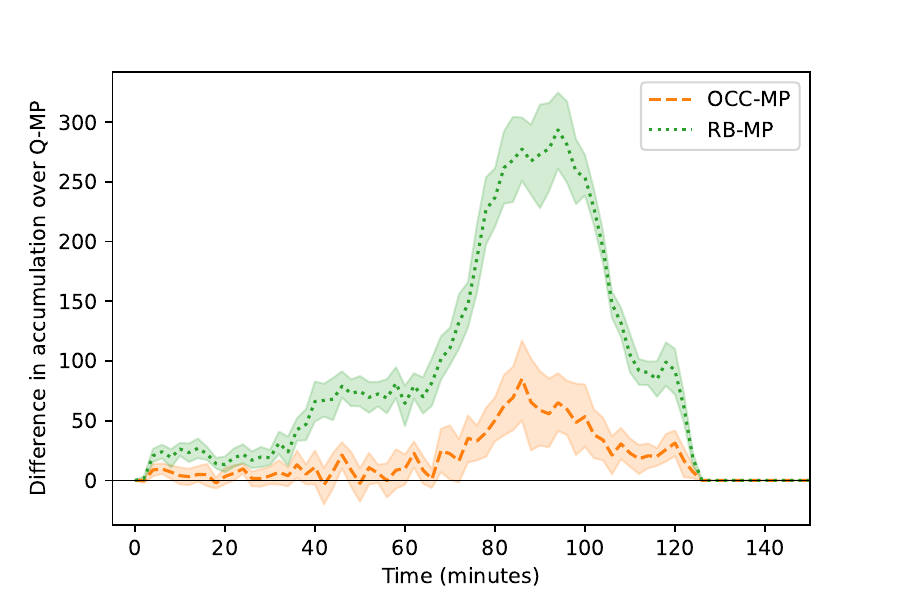}
    	\centering
    	\caption{Sub-scenario 1}
    	\label{fig:figure_11_0}
    \end{subfigure}
    \begin{subfigure}{0.45\textwidth}
        \includegraphics[width=2.7in]{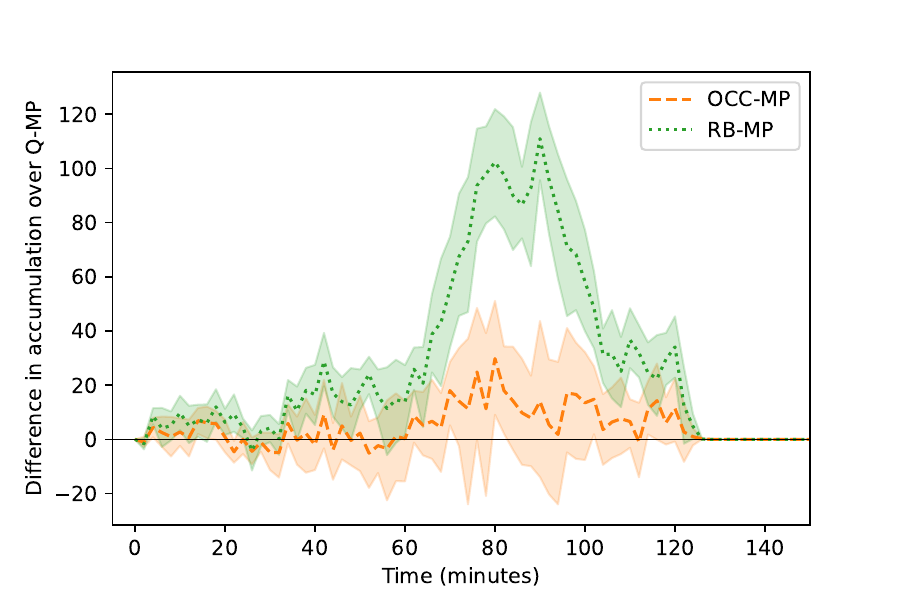}
    	\centering
    	\caption{Sub-scenario 2}
    	\label{fig:figure_11_1}
    \end{subfigure}

    \begin{subfigure}{0.45\textwidth}
        \includegraphics[width=2.7in]{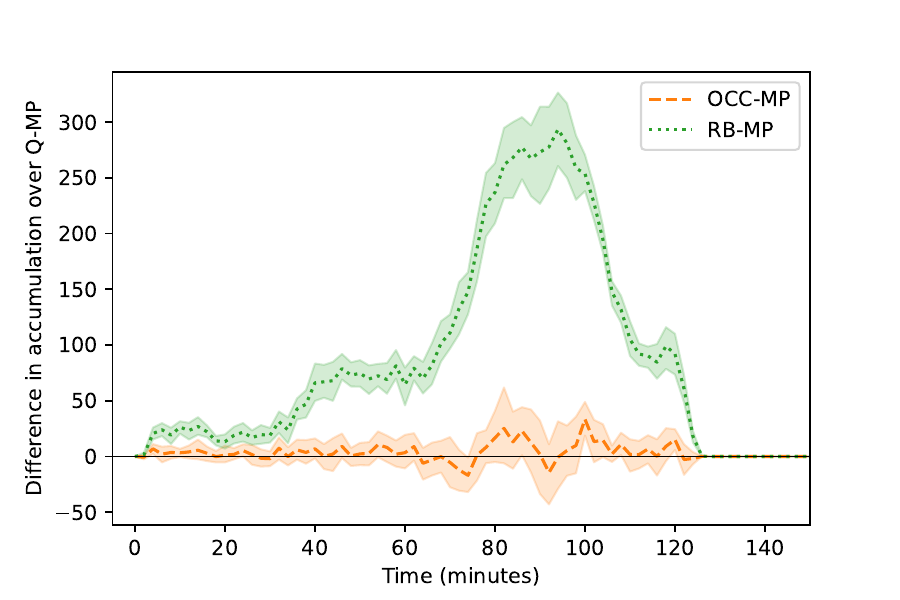}
    	\centering
    	\caption{Sub-scenario 3}
    	\label{fig:figure_11_2}
    \end{subfigure}
    \begin{subfigure}{0.45\textwidth}
        \includegraphics[width=2.7in]{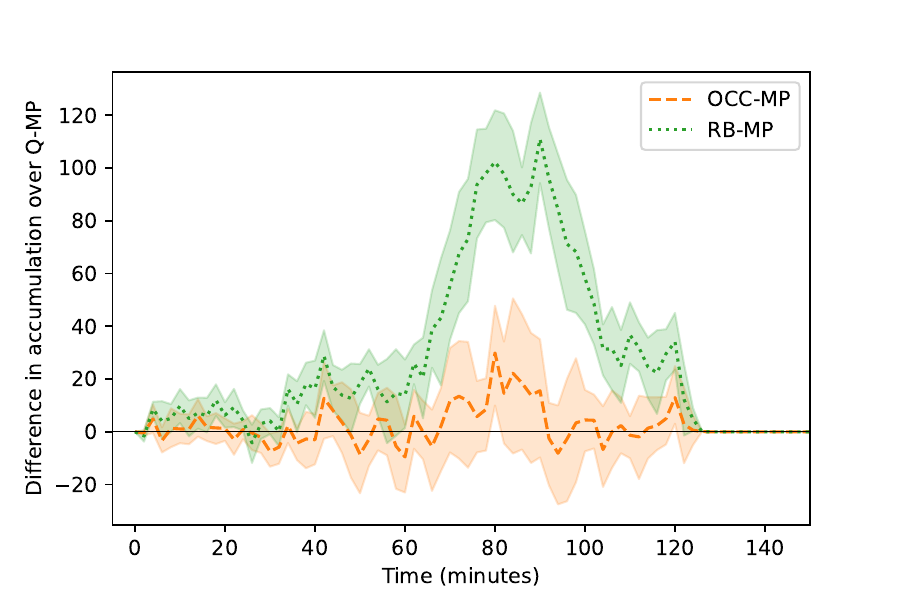}
    	\centering
    	\caption{Sub-scenario 4}
    	\label{fig:figure_11_3}
    \end{subfigure}

    \begin{subfigure}{0.45\textwidth}
        \includegraphics[width=2.7in]{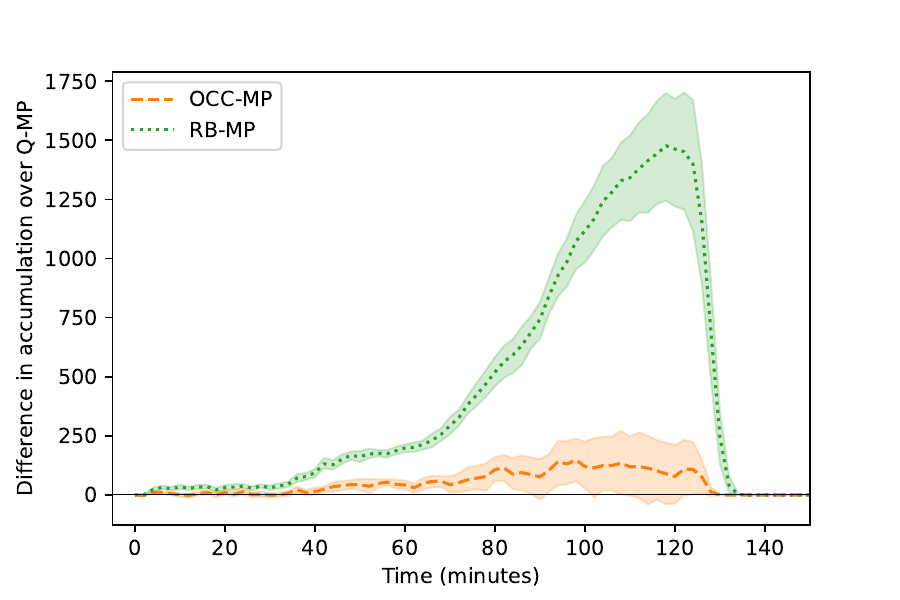}
    	\centering
    	\caption{Sub-scenario 5}
    	\label{fig:figure_11_4}
    \end{subfigure}
    \begin{subfigure}{0.45\textwidth}
        \includegraphics[width=2.7in]{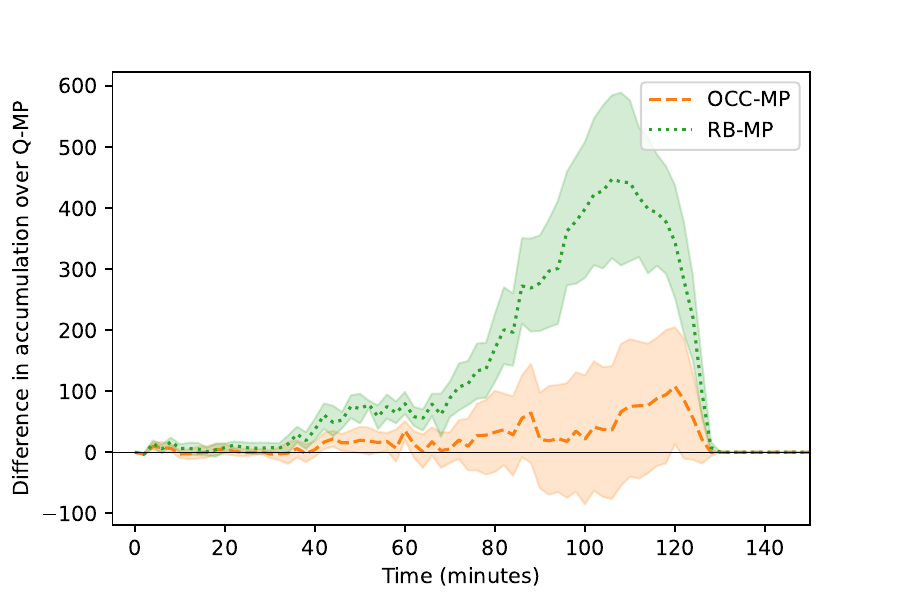}
    	\centering
    	\caption{Sub-scenario 6}
    	\label{fig:figure_11_5}
    \end{subfigure}

    \begin{subfigure}{0.45\textwidth}
        \includegraphics[width=2.7in]{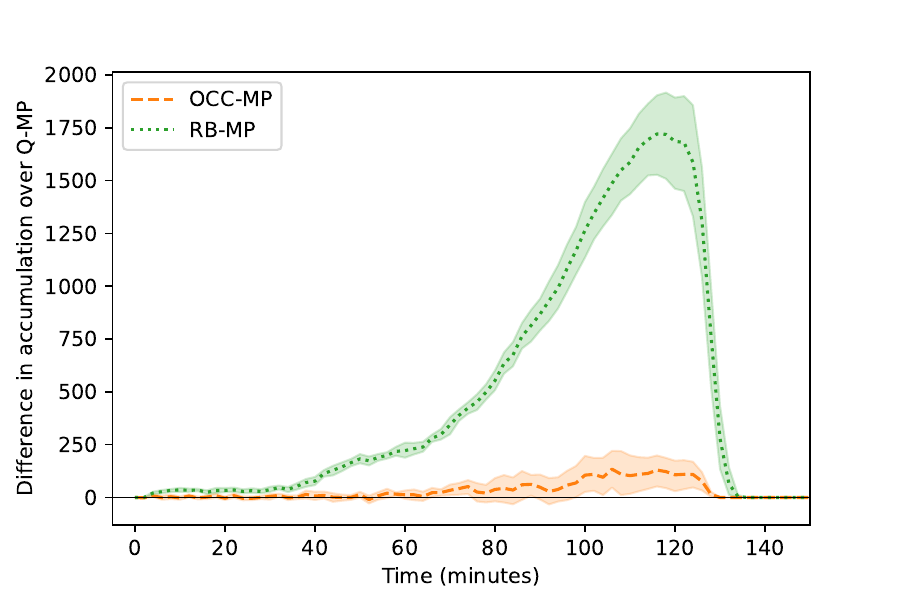}
    	\centering
    	\caption{Sub-scenario 7}
    	\label{fig:figure_11_6}
    \end{subfigure}
    \begin{subfigure}{0.45\textwidth}
        \includegraphics[width=2.7in]{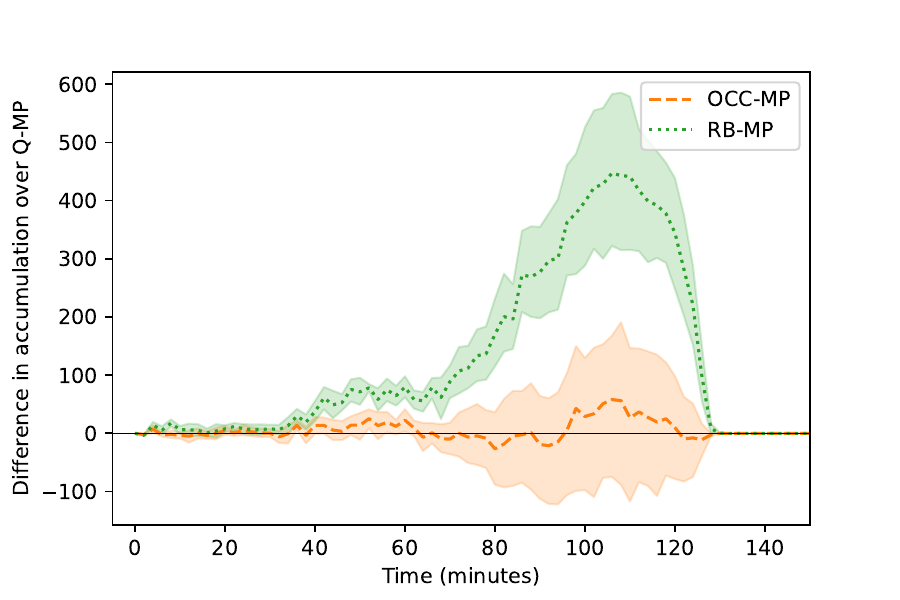}
    	\centering
    	\caption{Sub-scenario 8}
    	\label{fig:figure_11_7}
    \end{subfigure}
    
    \caption{Evolution of average accumulation in the network under different control policies for sub-scenarios 1-8.}
    \label{fig:figure_11}
\end{figure}
\subsubsection{Sensitivity to APC errors}
The proposed OCC-MP policy relies on accurate bus passenger occupancy for signal updates. To assess robustness of the policy against errors in APC data, additional simulations were conducted for sub-scenarios 1 and 3 in which  incremental error terms of $\pm$ $\sigma$ \% were introduced to the real occupancy of buses. Specifically, the occupancy being transmitted to the controller at each intersection was subject to this error and continued to accumulate until the bus exits the network.

Table \ref{table:table_3} shows the VTT of private vehicles and buses and combined PTT as $\sigma$ was increased from 0 to 40 at each intersection. The results indicate that there is relatively little variation observed across the network performance metrics. From  Table \ref{table:table_3}a, it can be seen that the travel times of private vehicles, buses, and passengers in Sub-scenario 1 do not change significantly with $\sigma$. This suggests that OCC-MP performs reasonably well even when there is significant misreporting of bus occupancies, particularly for buses with high passenger demand. Sub-scenario 3 corresponds to a case with similar private vehicle and bus demand as Sub-scenario 1 but with fewer bus passengers. Therefore, it is expected that further underreporting of its occupancy may give it little to no priority over private vehicles and potentially result in higher travel times.  Table \ref{table:table_3}b shows that change in $\sigma$ leads to slightly higher travel times than when $\sigma=0$. However, all values except bus travel time at $\sigma=40$ remain within one standard error of $\sigma=0$, suggesting differences are not statistically significant. The consistency of the results indicates that the OCC-MP policy can effectively adapt to and optimize travel times under potential discrepancies in the APC data.
\begin{table}[!htbp] \centering 
  \caption{Summary of network performance against variance in APC data } \label{table:table_3}
  \label{} 
\begin{tabular}{p{2cm}p{2cm}p{1.8cm}p{2cm}p{1.8cm}p{2cm}p{1.8cm}}
\\[-1.8ex]\hline 
\hline \\ \multicolumn{7}{c}{\textbf{(a) Sub-scenario 1}} \\ 
\cline{1-7} \\
[-1.8ex] & \multicolumn{2}{c}{Private vehicle}  & \multicolumn{2}{c}{Bus}  & \multicolumn{2}{c}{All passengers} \\
\cline{2-7} \\
$\sigma$ & Travel time (veh-hr) & Standard error & Travel time (veh-hr) & Standard error & Travel time (pax-hr) & Standard error
\\ 
\hline \\[-1.8ex] 
 \textbf{0\%} & 2298.66 & 12.33 & 37.39 & 0.14  & 5035.75 & 24.06 \\
    \textbf{10\%} & 2294.46 & 10.74 & 37.21 & 0.13  & 5021.77 & 19.81 \\
    \textbf{20\%} & 2295.34 & 11.42 & 37.14 & 0.11  & 5020.11 & 20.42 \\
    \textbf{30\%} & 2299.86 & 11.13 & 37.09 & 0.08  & 5023.77 & 18.53 \\
    \textbf{40\%} & 2299.46 & 12.22 & 37.32 & 0.14  & 5034.65 & 23.16 \\ 
\hline 
\hline \\
\multicolumn{7}{c}{\textbf{(b) Sub-scenario 3}} \\ 
\cline{1-7} \\
[-1.8ex] & \multicolumn{2}{c}{Private vehicle}  & \multicolumn{2}{c}{Bus}  & \multicolumn{2}{c}{All passengers} \\
\cline{2-7} \\
$\sigma$ & Travel time (veh-hr) & Standard error & Travel time (veh-hr) & Standard error & Travel time (pax-hr) & Standard error
\\ 
\hline \\[-1.8ex] 
 \textbf{0\%} & 2260.89 & 10.06 & 39.79 & 0.1   & 3754.45 & 15.72 \\
    \textbf{10\%} & 2265.1 & 9.93  & 39.88 & 0.1   & 3761.63 & 15.56 \\
    \textbf{20\%} & 2263.54 & 10.54 & 39.81 & 0.09  & 3758.36 & 16.36 \\
    \textbf{30\%} & 2262.01 & 9.98  & 39.81 & 0.12  & 3756.06 & 16.04 \\
    \textbf{40\%} & 2269.44 & 10.8  & 39.93 & 0.1   & 3768.99 & 16.57 \\ 
\hline 
\hline \\
\end{tabular} 
\end{table}

\subsection{Scenario 2: Connected vehicle environment}
 \subsubsection{Fully connected environment}

The OCC-MP strategy was evaluated by simulating private vehicles with known occupancies and variable bus occupancies to understand how the control policy impacts travel time of non-transit HOVs. Since RB-MP does not differentiate vehicles by occupancy, it was not included in the analysis. Figure \ref{fig:figure_12} presents a comparison of the percent change in PTT for OCC-MP over Q-MP for vehicles with different vehicle occupancies; values of 1 to 5 indicate private vehicles, while 6+ refers to buses. The results reveal that single occupancy vehicles experience an increase in their travel times over the Q-MP. However, OCC-MP effectively prioritizes movements with higher occupancy vehicles, resulting in reduced travel times for those vehicles. Specifically, vehicles with an occupancy of 3 or more experience improvement in travel time in 5 out of 6 sub-scenarios. Interestingly, sub-scenarios with low private vehicle and bus demand (2 and 4), exhibit lower travel time for vehicles with occupancy of 2 and more, highlighting the positive impact of OCC-MP. By prioritizing HOV and buses even in mixed flow conditions without dedicated bus or HOV lanes, OCC-MP can serve as a strategic approach to discourage single-occupancy vehicles on the roads, promoting more efficient and sustainable transportation options.

\begin{figure}[!ht]
    \centering
        \includegraphics[trim={0.4in 0.2in 0.5in 0.5in}, clip, width=4in]{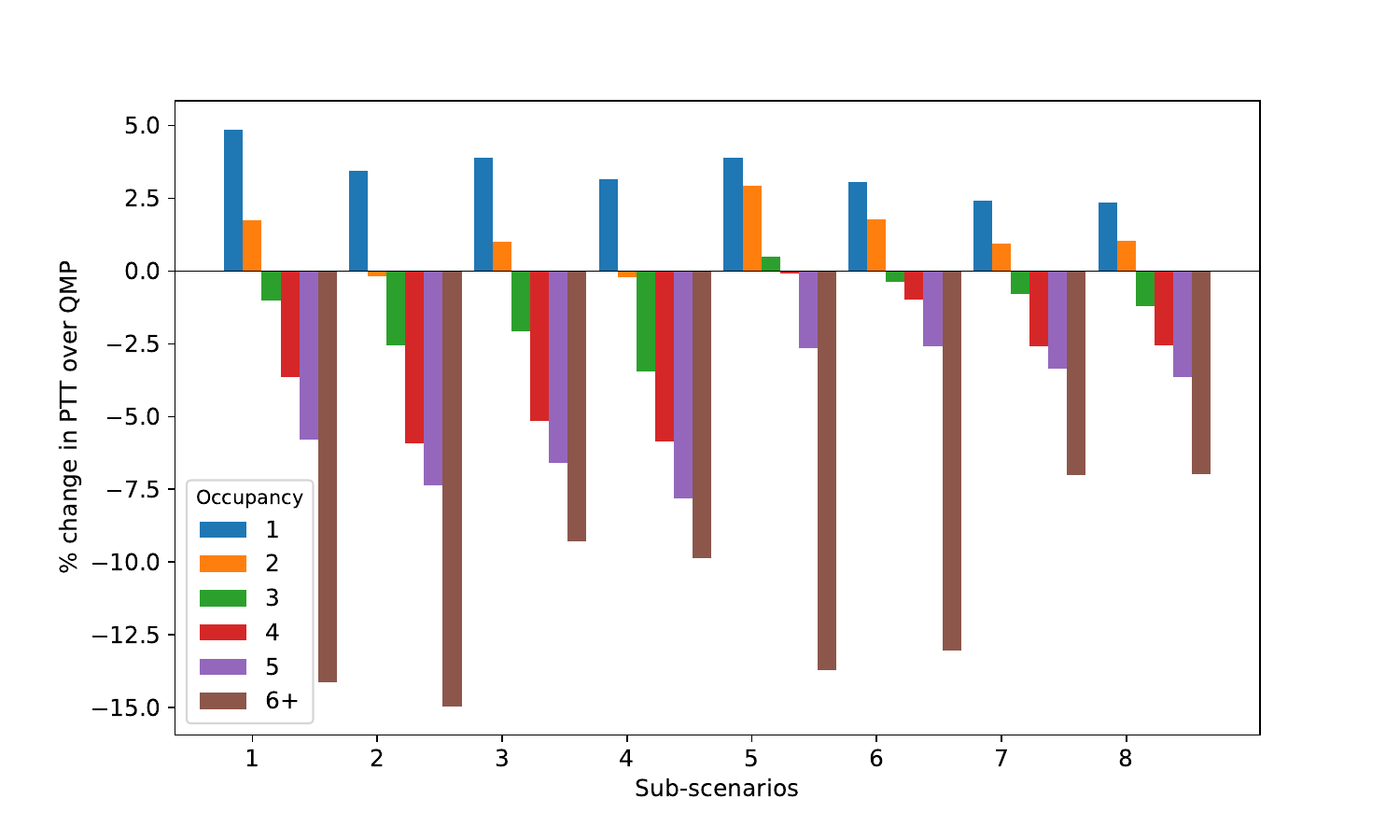}
    	\centering
    	\caption{Percent change in passenger travel time over Q-MP by occupancy under OCC-MP.}
    	\label{fig:figure_12}
\end{figure}

\subsubsection{Partially connected environment}

Although CV technology provides the potential to acquire detailed information from individual vehicles directly, implementation of a fully CV environment is farfetched. Therefore, the performances of the proposed OCC-MP policy and baseline methods were investigated under varying rates of CV penetration. In these tests, all control policies rely only on the information obtained from these CVs for measurement and updating the signal times. 
Figure \ref{fig:figure_13} shows the evolution of vehicle accumulation in the network for various CV penetration rates. The accumulation is highest for all three control policies when information from only 20\% of the private vehicles is available. With increasing CV penetration rate, the number of queued vehicles in the network drops for all three policies resulting in lower congestion. Notice, however, the returns are diminishing with respect to CV penetration rate; i.e., the highest improvements are gained from increasing the penetration rate when the penetration rate is low. Note also that both Q-MP and OCC-MP have similar performance in terms of network congestion and show consistent reduction in vehicle accumulation with increasing CV penetration rate, while the RB-MP strategy consistently performs the worst.

\begin{figure}[!ht]
    \centering
    \begin{subfigure}{0.45\textwidth}
        \includegraphics[width=2.7in]{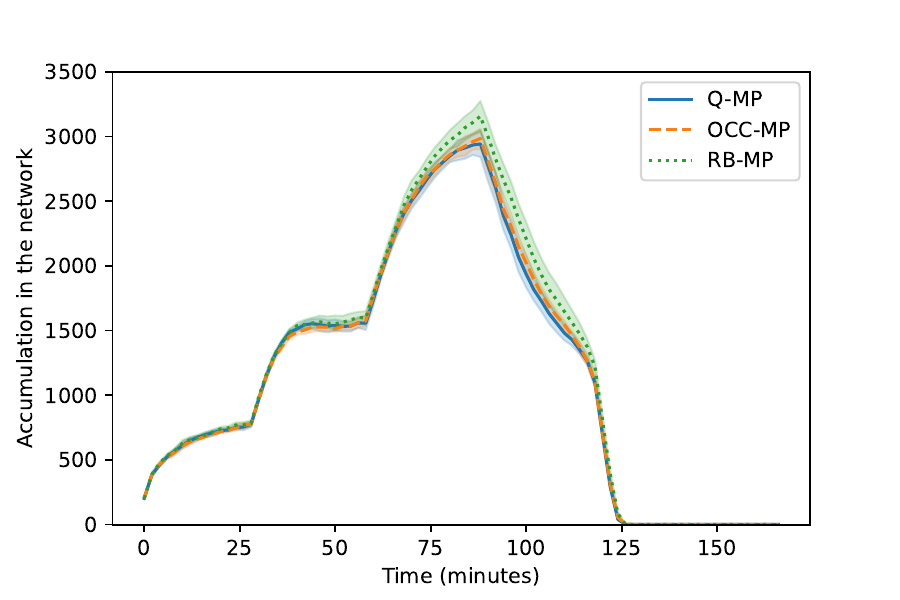}
    	\centering
    	\caption{20\%.}
    	\label{fig:figure_13_a}
    \end{subfigure}
    \begin{subfigure}{0.45\textwidth}
        \includegraphics[width=2.7in]{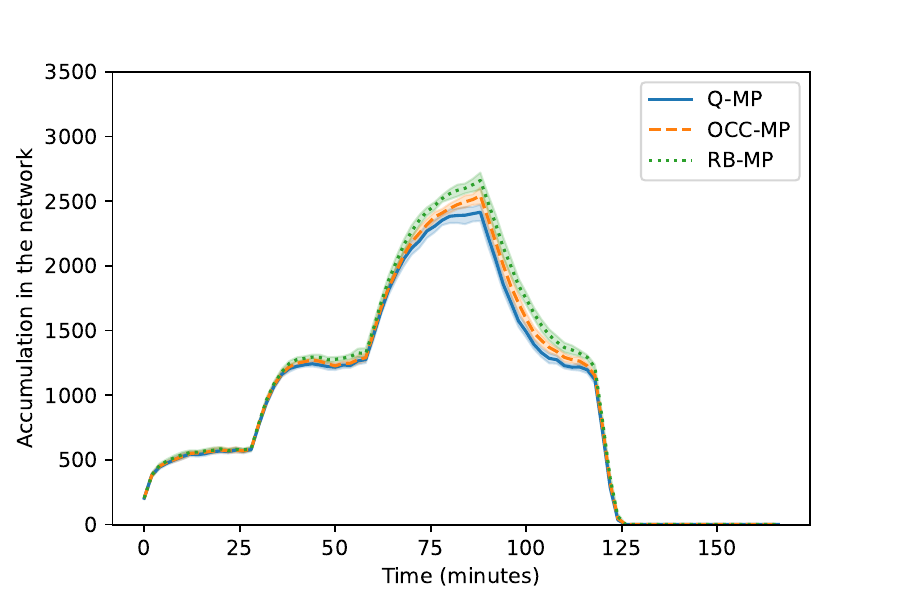}
    	\centering
    	\caption{40\%.}
    	\label{fig:figure_13_b}
    \end{subfigure}
    
    \begin{subfigure}{0.45\textwidth}
        \includegraphics[width=2.7in]{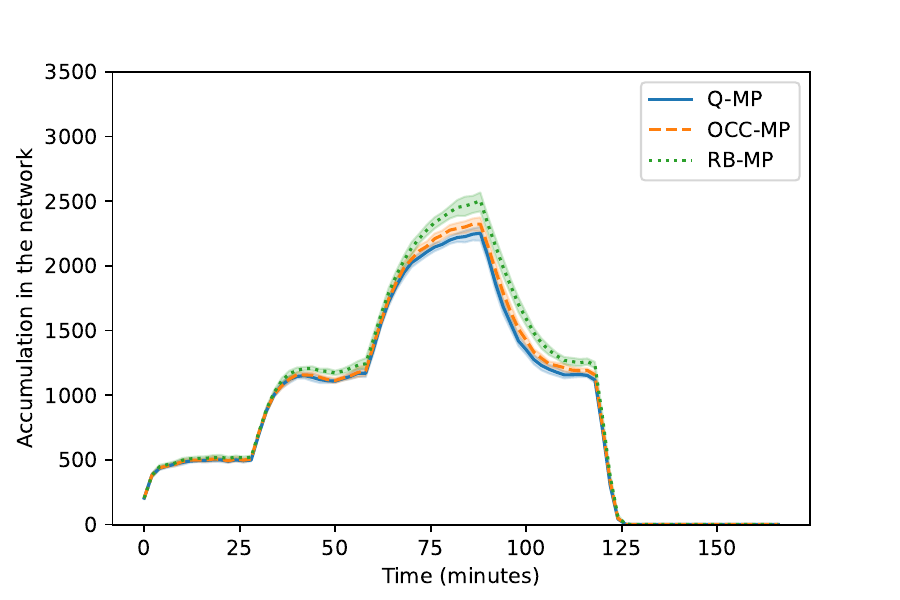}
    	\centering
    	\caption{60\%.}
    	\label{fig:figure_13_c}
    \end{subfigure}    
    \begin{subfigure}{0.45\textwidth}
        \includegraphics[width=2.7in]{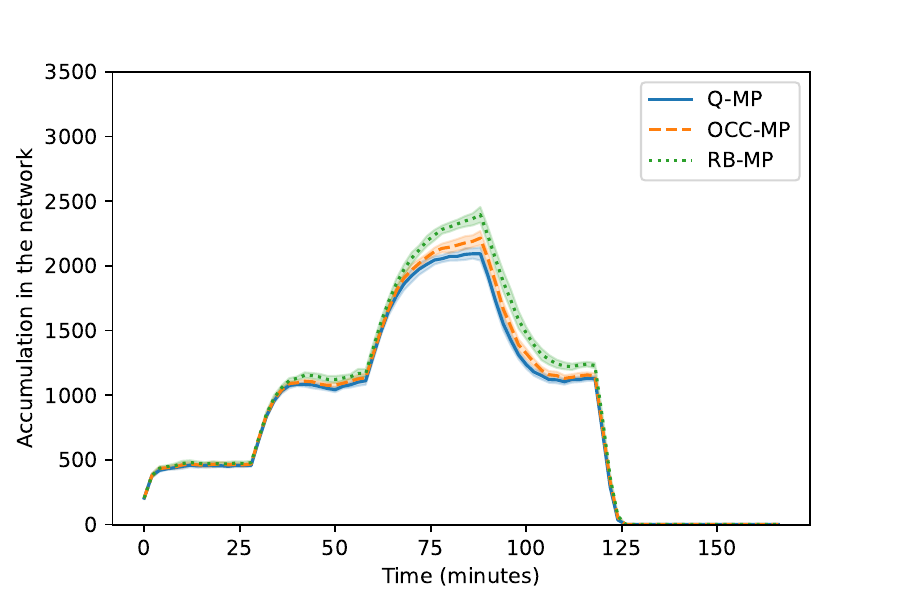}
    	\centering
    	\caption{80\%.}
    	\label{fig:figure_13_d}
    \end{subfigure}
    
    \begin{subfigure}{0.45\textwidth}
        \includegraphics[width=2.7in]{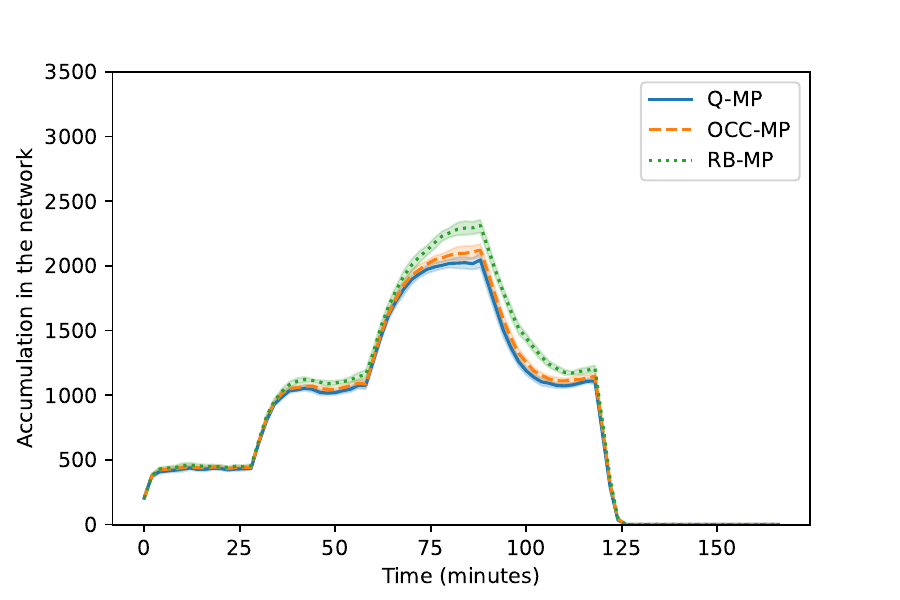}
    	\centering
    	\caption{100\%.}
    	\label{fig:figure_13_e}
    \end{subfigure}
    \caption{Accumulation of vehicles in the network for different CV penetration rates .}
    \label{fig:figure_13}
\end{figure}

The performance of the control policies in terms of private VTT, bus VTT and total PTT is shown in Figure \ref{fig:figure_14} for Sub-Scenario 1, which was chosen because OCC-MP demonstrated the largest improvement in PTT. Overall, it is observed that the increase in penetration rate of CVs improves the travel time of private vehicles and reduces the standard error across all policies, as more information becomes available on the actual queue lengths vehicle occupancy (Figure \ref{fig:figure_14_a}). The most significant improvements are observed for an increase in the penetration rate from 20\% to 40\%. This slowly diminishes as the penetration rate is further increased. RB-MP consistently demonstrates inferior performance compared to both Q-MP and OCC-MP. At 20\% penetration rate, travel times of private vehicles under Q-MP and OCC-MP are very similar, but Q-MP further reduces travel times with the increase in CV penetration.

\begin{figure}[!ht]
    \centering
    \begin{subfigure}{0.45\textwidth}
        \includegraphics[width=2.7in]{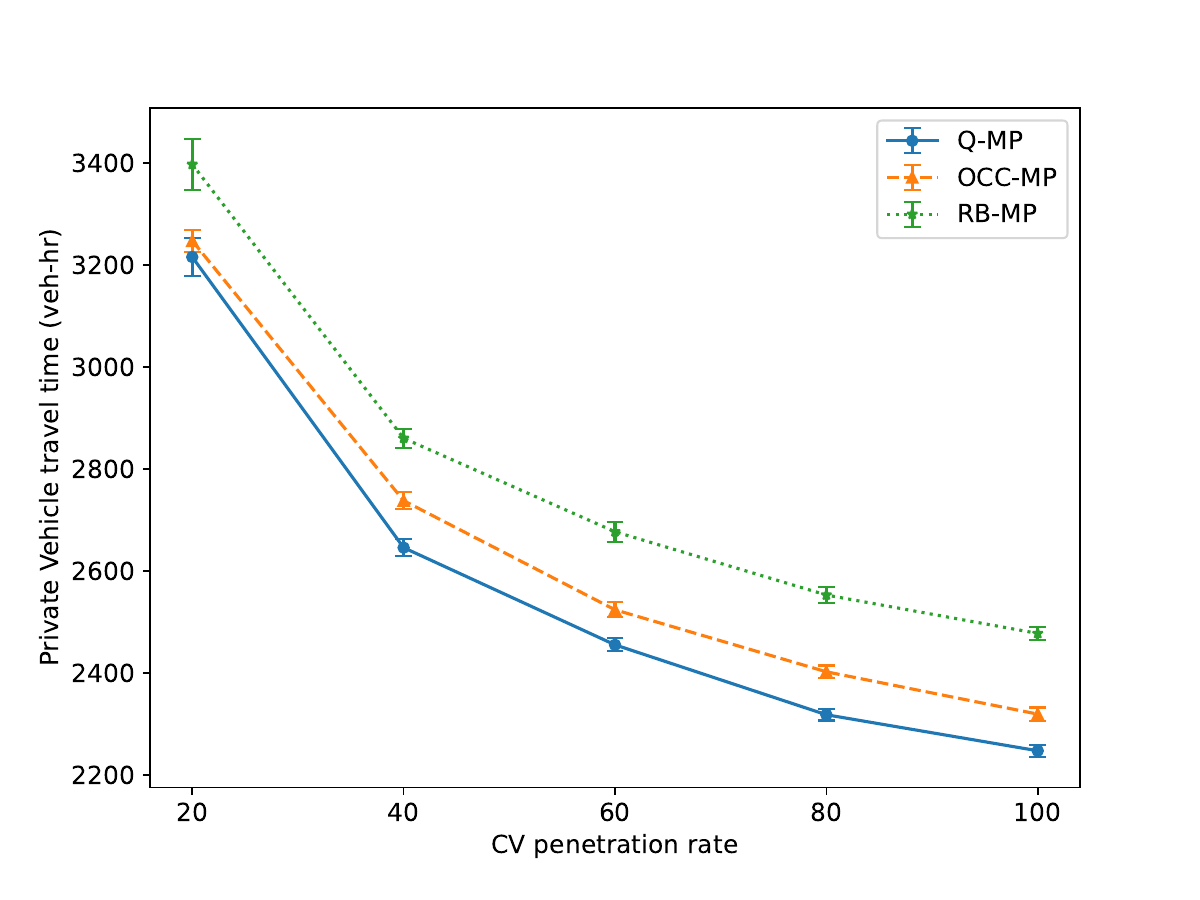}
    	\centering
    	\caption{Private vehicle travel time.}
    	\label{fig:figure_14_a}
    \end{subfigure}
    \begin{subfigure}{0.45\textwidth}
        \includegraphics[width=2.7in]{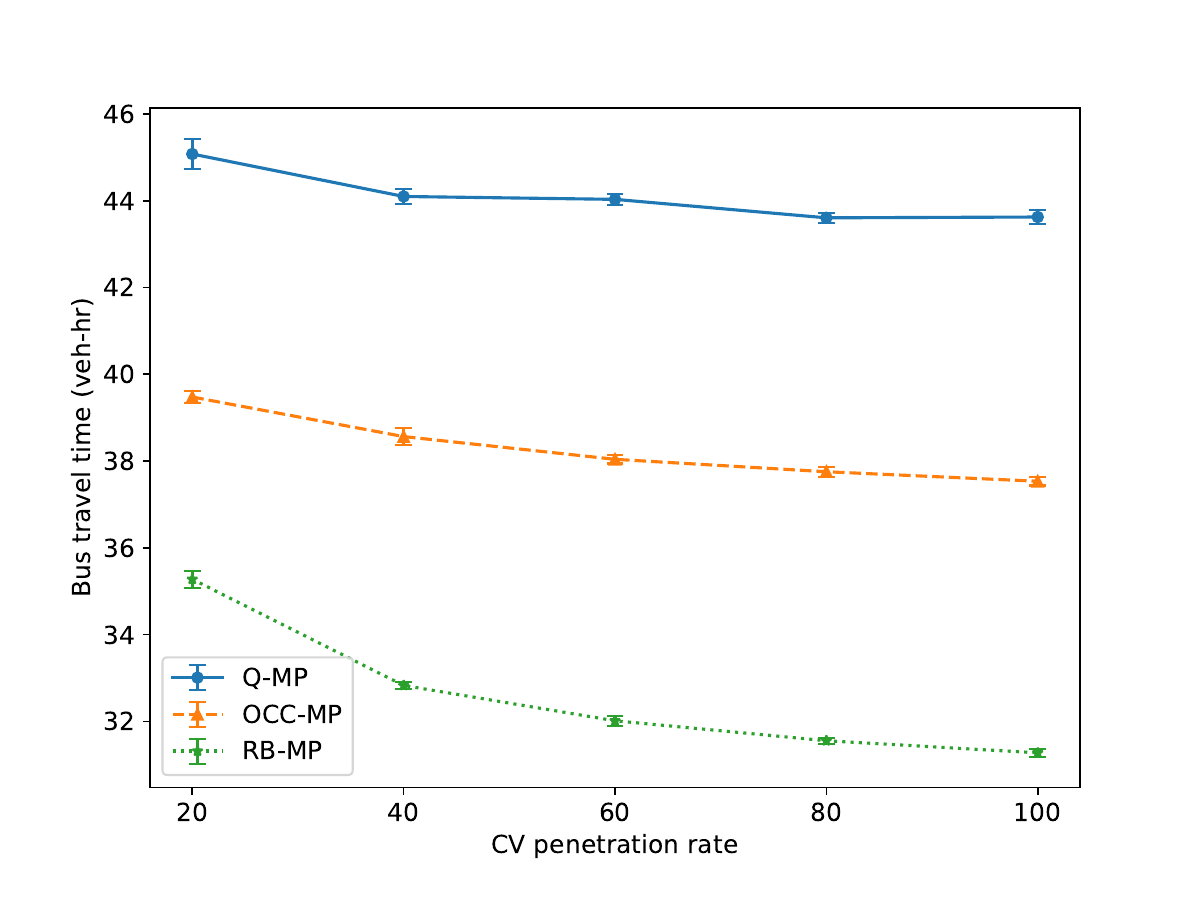}
    	\centering
    	\caption{Bus travel time.}
    	\label{fig:figure_14_b}
    \end{subfigure}
    
    \begin{subfigure}{0.45\textwidth}
        \includegraphics[width=2.7in]{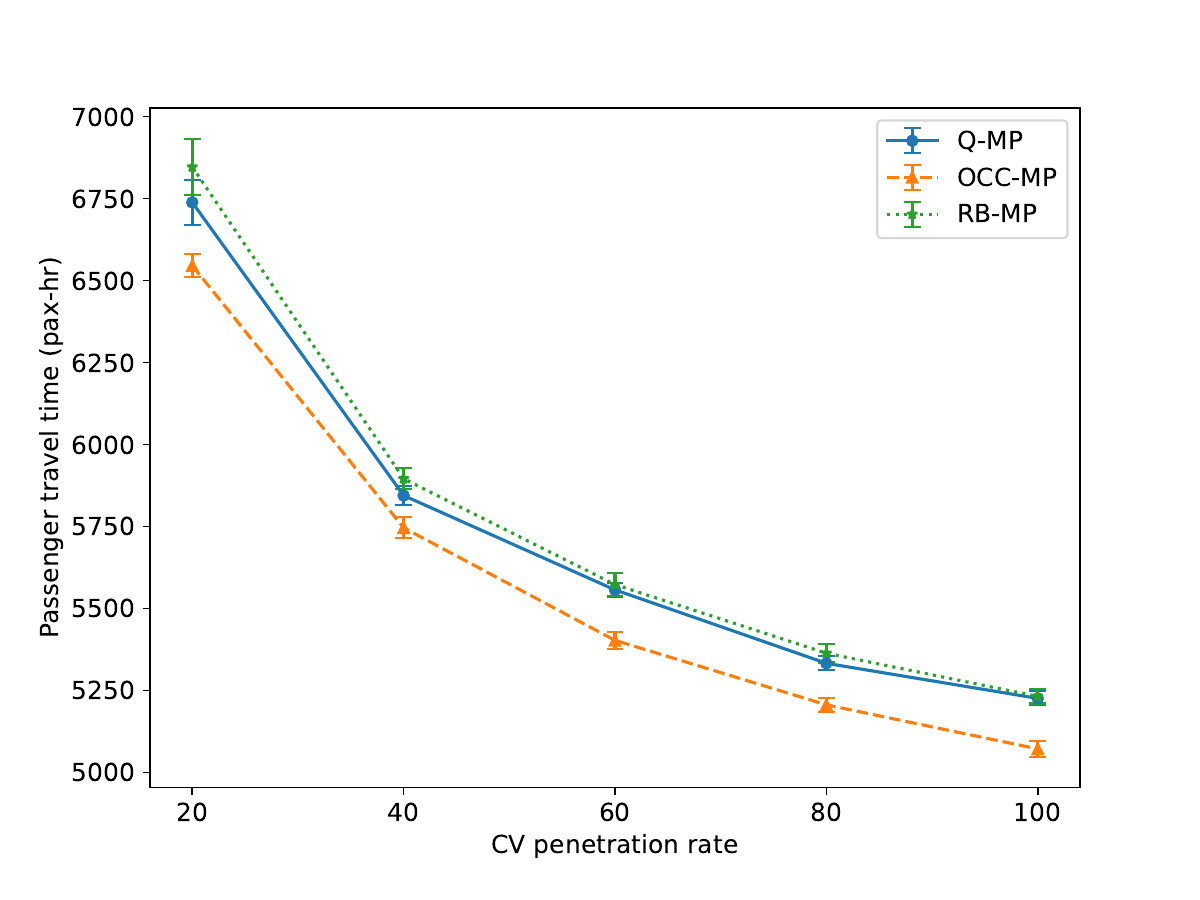}
    	\centering
    	\caption{Passenger travel time.}
    	\label{fig:figure_14_c}
    \end{subfigure}
    \caption{Effect of CV penetration rate.}
    \label{fig:figure_14}
\end{figure}

From Figure \ref{fig:figure_14_b}, it is evident that bus VTT also improves as the percentage of connected vehicles in the network increases. Although there is a tradeoff between private vehicle and bus travel times, increasing CV penetration translates to lower congestion in the network (Figure \ref{fig:figure_11}), which in turn improves the overall bus operations. Note that these improvements are nominal for both Q-MP and OCC-MP, whereas larger improvements are observed for RB-MP with lower standard errors across the random seeds. Despite resulting in higher private vehicle and bus VTT compared to Q-MP and RB-MP policies, respectively, OCC-MP consistently resulted in the lowest PTT for all CV penetration rates (Figure \ref{fig:figure_14_c}). Both the Q-MP and RB-MP have very similar performances in terms of PTT, with reductions that are smaller than that achieved by the OCC-MP. This highlights the reliability of the proposed OCC-MP policy even when only a subset of the vehicles is connected. 

\section{Conclusion}

Conventional MP algorithms rely only on vehicle-related metrics to update signal timings. Thus, these algorithms maximize vehicle throughput while overlooking the impact on transit and other HOVs. This study introduces an occupancy-based Max Pressure (OCC-MP) algorithm that considers both the number of queued vehicles and the passenger occupancies within the upstream queued vehicles. The proposed policy theoretically guarantees maximum stability at an isolated intersection, and is shown to prioritize transit and HOV movement without reducing the size of the stable region compared to the original MP when applied on an entire network. 

The performance of OCC-MP was tested against the original max pressure (Q-MP) and a rule-based MP algorithm that provides TSP (RB-MP). Micro-simulation tests on a grid network demonstrate that  OCC-MP outperforms RB-MP in terms of reducing negative impacts on private vehicles while reducing bus VTT compared to Q-MP. Overall, OCC-MP results in lower PTT under various demand and occupancy levels. This is because OCC-MP not only prioritizes transit vehicles and those with higher occupancies, but also serves the movements with large private vehicle queues. The best performance was observed in scenarios with lower private vehicle demands and higher bus demands, with larger bus passenger occupancies. A stability analysis showed that OCC-MP has a larger stable region than the RB-MP policy and one that is comparable to the Q-MP policy; this suggests that OCC-MP is able to handle similar demands to Q-MP (and larger than RB-MP) while still providing priority to buses. The control policy also demonstrates nominal variation in passenger travel time from errors in APC data, highlighting the robustness of the algorithm. Further tests in a CV environment show that an increase in the penetration rate of CVs improve the overall performance of OCC-MP in reducing PTT. In a fully CV environment, OCC-MP consistently outperforms baseline methods in reducing the VTT of HOVs and buses making it a sustainable strategy to discourage single occupancy vehicles in a transportation network without the need to implement expensive dedicated lane facilities. Moreover, OCC-MP can be readily deployed as a conditional TSP strategy in real world environments yet to fully transition to CV technology with minimal additional sensing requirements, primarily relying on existing technologies such as TSP communications and APC systems. Corridors where buses send a TSP request are already equipped with vehicle detection technology which can be slightly modified to include occupancy data transmission from buses. Its universal applicability renders it useful for implementation on networks with or without dedicated bus rapid transit (BRT) or HOV facilities. 

 Although the simulations were conducted on a grid network, further research can explore the performance of OCC-MP in more complex urban networks. Since the applicability of the proposed OCC-MP encompasses mixed traffic, it may be interesting to explore its performance in networks with dedicated bus lanes or HOV lanes. Moreover, given the increasing emphasis on creating "complete streets" that accommodate various modes of transportation, future studies may consider developing MP control algorithms that consider the complexities of multimodal transport. It is worth noting that \citep{liu2023total} demonstrated that different MP algorithms may have different optimal update intervals that maximize their performance. Therefore, the impact of optimal time-step for signal update interval can be explored for OCC-MP.

\section{Acknowledgements}
This research was supported by NSF, United States Grant CMMI-1749200.

\section{Author Contributions}
The authors confirm contribution to the paper as follows: study conception and design: TA, HL, VG; analysis and interpretation of results: TA, HL, VG; draft manuscript preparation: TA, HL, VG. All authors reviewed the results and approved the final version of the manuscript.

\newpage

\bibliographystyle{elsarticle-harv}
\bibliography{biblio}

\begin{thebibliography}{63}
\expandafter\ifx\csname natexlab\endcsname\relax\def\natexlab#1{#1}\fi
\providecommand{\url}[1]{\texttt{#1}}
\providecommand{\href}[2]{#2}
\providecommand{\path}[1]{#1}
\providecommand{\DOIprefix}{doi:}
\providecommand{\ArXivprefix}{arXiv:}
\providecommand{\URLprefix}{URL: }
\providecommand{\Pubmedprefix}{pmid:}
\providecommand{\doi}[1]{\href{http://dx.doi.org/#1}{\path{#1}}}
\providecommand{\Pubmed}[1]{\href{pmid:#1}{\path{#1}}}
\providecommand{\bibinfo}[2]{#2}
\ifx\xfnm\relax \def\xfnm[#1]{\unskip,\space#1}\fi
%Type = Article
\bibitem[{Ahmed et~al.(2024)Ahmed, Liu and Gayah}]{AHMED2024122}
\bibinfo{author}{Ahmed, T.}, \bibinfo{author}{Liu, H.}, \bibinfo{author}{Gayah,
  V.V.}, \bibinfo{year}{2024}.
\newblock \bibinfo{title}{Identification of optimal locations of adaptive
  traffic signal control using heuristic methods}.
\newblock \bibinfo{journal}{International Journal of Transportation Science and
  Technology} \bibinfo{volume}{13}, \bibinfo{pages}{122--136}.
\newblock \URLprefix
  \url{https://www.sciencedirect.com/science/article/pii/S2046043023001089},
  \DOIprefix\doi{https://doi.org/10.1016/j.ijtst.2023.12.003}.
%Type = Inproceedings
\bibitem[{Alizadeh~Shabestray and Abdulhai(2019)}]{alizadeh2019a}
\bibinfo{author}{Alizadeh~Shabestray, S.}, \bibinfo{author}{Abdulhai, B.},
  \bibinfo{year}{2019}.
\newblock \bibinfo{title}{Multimodal intelligent deep (mind) traffic signal
  controller}, in: \bibinfo{booktitle}{2019 IEEE Intelligent Transportation
  Systems Conference (ITSC). IEEE}, p. \bibinfo{pages}{4532–4539}.
\newblock \DOIprefix\doi{10.1109/ITSC.2019.8917493}.
%Type = Incollection
\bibitem[{Barceló and Casas(2005)}]{barcel2005a}
\bibinfo{author}{Barceló, J.}, \bibinfo{author}{Casas, J.},
  \bibinfo{year}{2005}.
\newblock \bibinfo{title}{Dynamic network simulation with aimsun}, in:
  \bibinfo{booktitle}{Simulation Approaches in Transportation Analysis}.
  \bibinfo{publisher}{Springer}, \bibinfo{address}{Boston}, p.
  \bibinfo{pages}{57–98}.
\newblock \DOIprefix\doi{10.1007/0-387-24109-4_3}.
%Type = Article
\bibitem[{Barman and Levin(2023)}]{barman2023a}
\bibinfo{author}{Barman, S.}, \bibinfo{author}{Levin, M.},
  \bibinfo{year}{2023}.
\newblock \bibinfo{title}{Throughput properties and optimal locations for
  limited deployment of max-pressure controls}.
\newblock \bibinfo{journal}{Transp Res Part C Emerg Technol}
  \bibinfo{volume}{150}, \bibinfo{pages}{104105}.
\newblock \DOIprefix\doi{10.1016/j.trc.2023.104105}.
%Type = Article
\bibitem[{Bayrak et~al.(2023)Bayrak, Yu and Gayah}]{bayrak2023a}
\bibinfo{author}{Bayrak, M.}, \bibinfo{author}{Yu, Z.}, \bibinfo{author}{Gayah,
  V.}, \bibinfo{year}{2023}.
\newblock \bibinfo{title}{A population-based incremental learning algorithm to
  identify optimal location of left-turn restrictions in urban grid networks}.
\newblock \bibinfo{journal}{Transportmetrica B: Transport Dynamics}
  \bibinfo{volume}{11}, \bibinfo{pages}{528–547}.
\newblock \DOIprefix\doi{10.1080/21680566.2022.2102553}.
%Type = Article
\bibitem[{Chen et~al.(2022)Chen, Lin, Li and He}]{chen2022a}
\bibinfo{author}{Chen, X.}, \bibinfo{author}{Lin, X.}, \bibinfo{author}{Li,
  M.}, \bibinfo{author}{He, F.}, \bibinfo{year}{2022}.
\newblock \bibinfo{title}{Network-level control of heterogeneous automated
  traffic guaranteeing bus priority}.
\newblock \bibinfo{journal}{Transp Res Part C Emerg Technol}
  \bibinfo{volume}{140}, \bibinfo{pages}{103671}.
\newblock \DOIprefix\doi{10.1016/j.trc.2022.103671}.
%Type = Article
\bibitem[{Chen et~al.(2023)Chen, Cheng and Chang}]{chen2023a}
\bibinfo{author}{Chen, Y.H.}, \bibinfo{author}{Cheng, Y.},
  \bibinfo{author}{Chang, G.L.}, \bibinfo{year}{2023}.
\newblock \bibinfo{title}{Incorporating bus delay minimization in design of
  signal progression for arterials accommodating heavy mixed-traffic flows}.
\newblock \bibinfo{journal}{J Intell Transp Syst} \bibinfo{volume}{27},
  \bibinfo{pages}{187–216}.
\newblock \DOIprefix\doi{10.1080/15472450.2021.2002149}.
%Type = Article
\bibitem[{Christofa et~al.(2016)Christofa, Ampountolas and
  Skabardonis}]{christofa2016a}
\bibinfo{author}{Christofa, E.}, \bibinfo{author}{Ampountolas, K.},
  \bibinfo{author}{Skabardonis, A.}, \bibinfo{year}{2016}.
\newblock \bibinfo{title}{Arterial traffic signal optimization: A person-based
  approach}.
\newblock \bibinfo{journal}{Transp Res Part C Emerg Technol}
  \bibinfo{volume}{66}, \bibinfo{pages}{27–47}.
\newblock \DOIprefix\doi{10.1016/j.trc.2015.11.009}.
%Type = Article
\bibitem[{Christofa et~al.(2013)Christofa, Papamichail and
  Skabardonis}]{christofa2013a}
\bibinfo{author}{Christofa, E.}, \bibinfo{author}{Papamichail, I.},
  \bibinfo{author}{Skabardonis, A.}, \bibinfo{year}{2013}.
\newblock \bibinfo{title}{Person-based traffic responsive signal control
  optimization}.
\newblock \bibinfo{journal}{IEEE Transactions on Intelligent Transportation
  Systems} \bibinfo{volume}{14}, \bibinfo{pages}{1278–1289}.
\newblock \DOIprefix\doi{10.1109/TITS.2013.2259623}.
%Type = Article
\bibitem[{Christofa and Skabardonis(2011a)}]{christofa2011a}
\bibinfo{author}{Christofa, E.}, \bibinfo{author}{Skabardonis, A.},
  \bibinfo{year}{2011}a.
\newblock \bibinfo{title}{Traffic signal optimization with application of
  transit signal priority to an isolated intersection}.
\newblock \bibinfo{journal}{Transportation Research Record: Journal of the
  Transportation Research Board} \bibinfo{volume}{2259},
  \bibinfo{pages}{192–201}.
\newblock \DOIprefix\doi{10.3141/2259-18}.
%Type = Article
\bibitem[{Christofa and Skabardonis(2011b)}]{christofa2011b}
\bibinfo{author}{Christofa, E.}, \bibinfo{author}{Skabardonis, A.},
  \bibinfo{year}{2011}b.
\newblock \bibinfo{title}{Traffic signal optimization with application of
  transit signal priority to an isolated intersection}.
\newblock \bibinfo{journal}{Transportation Research Record: Journal of the
  Transportation Research Board} \bibinfo{volume}{2259},
  \bibinfo{pages}{192–201}.
\newblock \DOIprefix\doi{10.3141/2259-18}.
%Type = Article
\bibitem[{Currie and Shalaby(2008)}]{currie2008a}
\bibinfo{author}{Currie, G.}, \bibinfo{author}{Shalaby, A.},
  \bibinfo{year}{2008}.
\newblock \bibinfo{title}{Active transit signal priority for streetcars}.
\newblock \bibinfo{journal}{Transportation Research Record: Journal of the
  Transportation Research Board} \bibinfo{volume}{2042},
  \bibinfo{pages}{41–49}.
\newblock \DOIprefix\doi{10.3141/2042-05}.
%Type = Article
\bibitem[{Ding et~al.(2015)Ding, Yang, Wang, Xu and Bao}]{ding2015a}
\bibinfo{author}{Ding, J.}, \bibinfo{author}{Yang, M.}, \bibinfo{author}{Wang,
  W.}, \bibinfo{author}{Xu, C.}, \bibinfo{author}{Bao, Y.},
  \bibinfo{year}{2015}.
\newblock \bibinfo{title}{Strategy for multiobjective transit signal priority
  with prediction of bus dwell time at stops}.
\newblock \bibinfo{journal}{Transportation Research Record: Journal of the
  Transportation Research Board} \bibinfo{volume}{2488},
  \bibinfo{pages}{10–19}.
\newblock \DOIprefix\doi{10.3141/2488-02}.
%Type = Article
\bibitem[{Dixit et~al.(2020)Dixit, Nair, Chand and Levin}]{dixit2020a}
\bibinfo{author}{Dixit, V.}, \bibinfo{author}{Nair, D.},
  \bibinfo{author}{Chand, S.}, \bibinfo{author}{Levin, M.},
  \bibinfo{year}{2020}.
\newblock \bibinfo{title}{A simple crowdsourced delay-based traffic signal
  control}.
\newblock \bibinfo{journal}{PLoS One} \bibinfo{volume}{15}.
\newblock \DOIprefix\doi{10.1371/journal.pone.0230598}.
%Type = Article
\bibitem[{Ghanim and Abu-Lebdeh(2015)}]{ghanim2015a}
\bibinfo{author}{Ghanim, M.}, \bibinfo{author}{Abu-Lebdeh, G.},
  \bibinfo{year}{2015}.
\newblock \bibinfo{title}{Real-time dynamic transit signal priority
  optimization for coordinated traffic networks using genetic algorithms and
  artificial neural networks}.
\newblock \bibinfo{journal}{J Intell Transp Syst} \bibinfo{volume}{19},
  \bibinfo{pages}{327–338}.
\newblock \DOIprefix\doi{10.1080/15472450.2014.936292}.
%Type = Article
\bibitem[{Gregoire et~al.(2014)Gregoire, Frazzoli, Fortelle and
  Wongpiromsarn}]{gregoire2014a}
\bibinfo{author}{Gregoire, J.}, \bibinfo{author}{Frazzoli, E.},
  \bibinfo{author}{Fortelle, A.L.}, \bibinfo{author}{Wongpiromsarn, T.},
  \bibinfo{year}{2014}.
\newblock \bibinfo{title}{Back-pressure traffic signal control with unknown
  routing rates}.
\newblock \bibinfo{journal}{IFAC Proceedings} \bibinfo{volume}{Volumes 47},
  \bibinfo{pages}{11332–11337}.
\newblock \DOIprefix\doi{10.3182/20140824-6-ZA-1003.01585}.
%Type = Article
\bibitem[{Gregoire et~al.(2015)Gregoire, Qian, Frazzoli, Fortelle and
  Wongpiromsarn}]{gregoire2015a}
\bibinfo{author}{Gregoire, J.}, \bibinfo{author}{Qian, X.},
  \bibinfo{author}{Frazzoli, E.}, \bibinfo{author}{Fortelle, A.},
  \bibinfo{author}{Wongpiromsarn, T.}, \bibinfo{year}{2015}.
\newblock \bibinfo{title}{Capacity-aware backpressure traffic signal control}.
\newblock \bibinfo{journal}{IEEE Trans Control Netw Syst} \bibinfo{volume}{2},
  \bibinfo{pages}{164–173}.
\newblock \DOIprefix\doi{10.1109/TCNS.2014.2378871}.
%Type = Article
\bibitem[{Guo and Wang(2021)}]{guo2021a}
\bibinfo{author}{Guo, G.}, \bibinfo{author}{Wang, Y.}, \bibinfo{year}{2021}.
\newblock \bibinfo{title}{An integrated mpc and deep reinforcement learning
  approach to trams-priority active signal control}.
\newblock \bibinfo{journal}{Control Eng Pract} \bibinfo{volume}{110},
  \bibinfo{pages}{104758}.
\newblock \DOIprefix\doi{10.1016/j.conengprac.2021.104758}.
%Type = Article
\bibitem[{Head et~al.(2006)Head, Gettman and Wei}]{head2006a}
\bibinfo{author}{Head, L.}, \bibinfo{author}{Gettman, D.},
  \bibinfo{author}{Wei, Z.}, \bibinfo{year}{2006}.
\newblock \bibinfo{title}{Decision model for priority control of traffic
  signals}.
\newblock \bibinfo{journal}{Transportation Research Record: Journal of the
  Transportation Research Board} \bibinfo{volume}{1978},
  \bibinfo{pages}{169–177}.
\newblock \DOIprefix\doi{10.1177/0361198106197800121}.
%Type = Article
\bibitem[{Hu et~al.(2015)Hu, Park and Lee}]{hu2015a}
\bibinfo{author}{Hu, J.}, \bibinfo{author}{Park, B.}, \bibinfo{author}{Lee,
  Y.J.}, \bibinfo{year}{2015}.
\newblock \bibinfo{title}{Coordinated transit signal priority supporting
  transit progression under connected vehicle technology}.
\newblock \bibinfo{journal}{Transp Res Part C Emerg Technol}
  \bibinfo{volume}{55}, \bibinfo{pages}{393–408}.
\newblock \DOIprefix\doi{10.1016/j.trc.2014.12.005}.
%Type = Article
\bibitem[{Hu et~al.(2016)Hu, Park and Lee}]{hu2016a}
\bibinfo{author}{Hu, J.}, \bibinfo{author}{Park, B.}, \bibinfo{author}{Lee,
  Y.J.}, \bibinfo{year}{2016}.
\newblock \bibinfo{title}{Transit signal priority accommodating conflicting
  requests under connected vehicles technology}.
\newblock \bibinfo{journal}{Transp Res Part C Emerg Technol}
  \bibinfo{volume}{69}, \bibinfo{pages}{173–192}.
\newblock \DOIprefix\doi{10.1016/j.trc.2016.06.001}.
%Type = Article
\bibitem[{Hu et~al.(2022)Hu, Zhang, Feng, Sun, Li and Yang}]{hu2022a}
\bibinfo{author}{Hu, J.}, \bibinfo{author}{Zhang, Z.}, \bibinfo{author}{Feng,
  Y.}, \bibinfo{author}{Sun, Z.}, \bibinfo{author}{Li, X.},
  \bibinfo{author}{Yang, X.}, \bibinfo{year}{2022}.
\newblock \bibinfo{title}{Transit signal priority enabling connected and
  automated buses to cut through traffic}.
\newblock \bibinfo{journal}{IEEE Transactions on Intelligent Transportation
  Systems} \bibinfo{volume}{23}, \bibinfo{pages}{8782–8792}.
\newblock \DOIprefix\doi{10.1109/TITS.2021.3086110}.
%Type = Article
\bibitem[{Knoop et~al.(2012)Knoop, Hoogendoorn and Lint}]{knoop2012a}
\bibinfo{author}{Knoop, V.}, \bibinfo{author}{Hoogendoorn, S.},
  \bibinfo{author}{Lint, J.}, \bibinfo{year}{2012}.
\newblock \bibinfo{title}{Routing strategies based on macroscopic fundamental
  diagram}.
\newblock \bibinfo{journal}{Transportation Research Record: Journal of the
  Transportation Research Board} \bibinfo{volume}{2315},
  \bibinfo{pages}{1–10}.
\newblock \DOIprefix\doi{10.3141/2315-01}.
%Type = Article
\bibitem[{Kouvelas et~al.(2014)Kouvelas, Lioris, Fayazi and
  Varaiya}]{kouvelas2014a}
\bibinfo{author}{Kouvelas, A.}, \bibinfo{author}{Lioris, J.},
  \bibinfo{author}{Fayazi, S.}, \bibinfo{author}{Varaiya, P.},
  \bibinfo{year}{2014}.
\newblock \bibinfo{title}{Maximum pressure controller for stabilizing queues in
  signalized arterial networks}.
\newblock \bibinfo{journal}{Transportation Research Record: Journal of the
  Transportation Research Board} \bibinfo{volume}{2421},
  \bibinfo{pages}{133–141}.
\newblock \DOIprefix\doi{10.3141/2421-15}.
%Type = Article
\bibitem[{Le et~al.(2015)Le, Kovács, Walton, Vu, Andrew and
  Hoogendoorn}]{le2015a}
\bibinfo{author}{Le, T.}, \bibinfo{author}{Kovács, P.},
  \bibinfo{author}{Walton, N.}, \bibinfo{author}{Vu, H.},
  \bibinfo{author}{Andrew, L.}, \bibinfo{author}{Hoogendoorn, S.},
  \bibinfo{year}{2015}.
\newblock \bibinfo{title}{Decentralized signal control for urban road
  networks}.
\newblock \bibinfo{journal}{Transp Res Part C Emerg Technol}
  \bibinfo{volume}{58}, \bibinfo{pages}{431–450}.
\newblock \DOIprefix\doi{10.1016/j.trc.2014.11.009}.
%Type = Article
\bibitem[{Lee and Wang(2022)}]{lee2022a}
\bibinfo{author}{Lee, W.}, \bibinfo{author}{Wang, H.}, \bibinfo{year}{2022}.
\newblock \bibinfo{title}{A person-based adaptive traffic signal control method
  with cooperative transit signal priority}.
\newblock \bibinfo{journal}{J Adv Transp} \URLprefix
  \url{https://doi.org/10.1155/2022/2205292},
  \DOIprefix\doi{10.1155/2022/2205292}.
%Type = Article
\bibitem[{Levin(2023)}]{levin2023a}
\bibinfo{author}{Levin, M.}, \bibinfo{year}{2023}.
\newblock \bibinfo{title}{Max-pressure traffic signal timing: A summary of
  methodological and experimental results}.
\newblock \bibinfo{journal}{J Transp Eng A Syst} \bibinfo{volume}{149}.
\newblock \DOIprefix\doi{10.1061/jtepbs.teeng-7578}.
%Type = Article
\bibitem[{Levin et~al.(2020)Levin, Hu and Odell}]{levin2020a}
\bibinfo{author}{Levin, M.}, \bibinfo{author}{Hu, J.}, \bibinfo{author}{Odell,
  M.}, \bibinfo{year}{2020}.
\newblock \bibinfo{title}{Max-pressure signal control with cyclical phase
  structure}.
\newblock \bibinfo{journal}{Transp Res Part C Emerg Technol}
  \bibinfo{volume}{120}, \bibinfo{pages}{102828}.
\newblock \DOIprefix\doi{10.1016/j.trc.2020.102828}.
%Type = Article
\bibitem[{Li et~al.(2020)Li, Liu, Yang and Chen}]{li2020a}
\bibinfo{author}{Li, J.}, \bibinfo{author}{Liu, Y.}, \bibinfo{author}{Yang,
  H.}, \bibinfo{author}{Chen, B.}, \bibinfo{year}{2020}.
\newblock \bibinfo{title}{Bus priority signal control considering delays of
  passengers and pedestrians of adjacent intersections}.
\newblock \bibinfo{journal}{J Adv Transp} ,
  \bibinfo{pages}{1–12}\DOIprefix\doi{10.1155/2020/3935795}.
%Type = Article
\bibitem[{Li and Jabari(2019)}]{li2019a}
\bibinfo{author}{Li, L.}, \bibinfo{author}{Jabari, S.}, \bibinfo{year}{2019}.
\newblock \bibinfo{title}{Position weighted backpressure intersection control
  for urban networks}.
\newblock \bibinfo{journal}{Transportation Research Part B: Methodological}
  \bibinfo{volume}{128}, \bibinfo{pages}{435–461}.
\newblock \DOIprefix\doi{10.1016/j.trb.2019.08.005}.
%Type = Article
\bibitem[{Lin et~al.(2019)Lin, Yang and Zou}]{lin2019a}
\bibinfo{author}{Lin, Y.}, \bibinfo{author}{Yang, X.}, \bibinfo{author}{Zou,
  N.}, \bibinfo{year}{2019}.
\newblock \bibinfo{title}{Passive transit signal priority for high transit
  demand: model formulation and strategy selection}.
\newblock \bibinfo{journal}{Transportation Letters} \bibinfo{volume}{11},
  \bibinfo{pages}{119–129}.
\newblock \DOIprefix\doi{10.1080/19427867.2017.1295899}.
%Type = Article
\bibitem[{Lin et~al.(2015)Lin, Yang, Zou and Franz}]{lin2015a}
\bibinfo{author}{Lin, Y.}, \bibinfo{author}{Yang, X.}, \bibinfo{author}{Zou,
  N.}, \bibinfo{author}{Franz, M.}, \bibinfo{year}{2015}.
\newblock \bibinfo{title}{Transit signal priority control at signalized
  intersections: a comprehensive review}.
\newblock \bibinfo{journal}{Transportation Letters} \bibinfo{volume}{7},
  \bibinfo{pages}{168–180}.
\newblock \DOIprefix\doi{10.1179/1942787514Y.0000000044}.
%Type = Article
\bibitem[{Ling and Shalaby(2004)}]{ling2004a}
\bibinfo{author}{Ling, K.}, \bibinfo{author}{Shalaby, A.},
  \bibinfo{year}{2004}.
\newblock \bibinfo{title}{Automated transit headway control via adaptive signal
  priority}.
\newblock \bibinfo{journal}{J Adv Transp} \bibinfo{volume}{38},
  \bibinfo{pages}{45–67}.
\newblock \DOIprefix\doi{10.1002/atr.5670380105}.
%Type = Article
\bibitem[{Lioris et~al.(2016)Lioris, Kurzhanskiy and Varaiya}]{lioris2016a}
\bibinfo{author}{Lioris, J.}, \bibinfo{author}{Kurzhanskiy, A.},
  \bibinfo{author}{Varaiya, P.}, \bibinfo{year}{2016}.
\newblock \bibinfo{title}{Adaptive max pressure control of network of
  signalized intersections}.
\newblock \bibinfo{journal}{IFAC-PapersOnLine} \bibinfo{volume}{49},
  \bibinfo{pages}{19–24}.
\newblock \DOIprefix\doi{10.1016/j.ifacol.2016.10.366}.
%Type = Article
\bibitem[{Liu and Gayah(2022)}]{liu2022novel}
\bibinfo{author}{Liu, H.}, \bibinfo{author}{Gayah, V.V.}, \bibinfo{year}{2022}.
\newblock \bibinfo{title}{A novel max pressure algorithm based on traffic
  delay}.
\newblock \bibinfo{journal}{Transportation Research Part C: Emerging
  Technologies} \bibinfo{volume}{143}, \bibinfo{pages}{103803}.
%Type = Article
\bibitem[{Liu and Gayah(2023)}]{liu2023total}
\bibinfo{author}{Liu, H.}, \bibinfo{author}{Gayah, V.V.}, \bibinfo{year}{2023}.
\newblock \bibinfo{title}{Total-delay-based max pressure: A max pressure
  algorithm considering delay equity}.
\newblock \bibinfo{journal}{Transportation Research Record} ,
  \bibinfo{pages}{03611981221147051}.
%Type = Article
\bibitem[{Long et~al.(2022)Long, Zou, Zhou and Chung}]{long2022a}
\bibinfo{author}{Long, M.}, \bibinfo{author}{Zou, X.}, \bibinfo{author}{Zhou,
  Y.}, \bibinfo{author}{Chung, E.}, \bibinfo{year}{2022}.
\newblock \bibinfo{title}{Deep reinforcement learning for transit signal
  priority in a connected environment}.
\newblock \bibinfo{journal}{Transp Res Part C Emerg Technol}
  \bibinfo{volume}{142}, \bibinfo{pages}{103814}.
\newblock \DOIprefix\doi{10.1016/j.trc.2022.103814}.
%Type = Article
\bibitem[{Ma et~al.(2014)Ma, Head and Feng}]{ma2014a}
\bibinfo{author}{Ma, W.}, \bibinfo{author}{Head, K.}, \bibinfo{author}{Feng,
  Y.}, \bibinfo{year}{2014}.
\newblock \bibinfo{title}{Integrated optimization of transit priority operation
  at isolated intersections: A person-capacity-based approach}.
\newblock \bibinfo{journal}{Transp Res Part C Emerg Technol}
  \bibinfo{volume}{40}, \bibinfo{pages}{49–62}.
\newblock \DOIprefix\doi{10.1016/j.trc.2013.12.011}.
%Type = Article
\bibitem[{Ma et~al.(2013)Ma, Liu and Yang}]{ma2013a}
\bibinfo{author}{Ma, W.}, \bibinfo{author}{Liu, Y.}, \bibinfo{author}{Yang,
  X.}, \bibinfo{year}{2013}.
\newblock \bibinfo{title}{A dynamic programming approach for optimal signal
  priority control upon multiple high-frequency bus requests}.
\newblock \bibinfo{journal}{J Intell Transp Syst} \bibinfo{volume}{17},
  \bibinfo{pages}{282–293}.
\newblock \DOIprefix\doi{10.1080/15472450.2012.729380}.
%Type = Article
\bibitem[{Mazloumian et~al.(2010)Mazloumian, Geroliminis and
  Helbing}]{mazloumian2010a}
\bibinfo{author}{Mazloumian, A.}, \bibinfo{author}{Geroliminis, N.},
  \bibinfo{author}{Helbing, D.}, \bibinfo{year}{2010}.
\newblock \bibinfo{title}{The spatial variability of vehicle densities as
  determinant of urban network capacity}.
\newblock \bibinfo{journal}{Philosophical Transactions of the Royal Society A:
  Mathematical, Physical and Engineering Sciences} \bibinfo{volume}{368},
  \bibinfo{pages}{4627–4647}.
\newblock \DOIprefix\doi{10.1098/rsta.2010.0099}.
%Type = Article
\bibitem[{Mercader et~al.(2020)Mercader, Uwayid and Haddad}]{mercader2020a}
\bibinfo{author}{Mercader, P.}, \bibinfo{author}{Uwayid, W.},
  \bibinfo{author}{Haddad, J.}, \bibinfo{year}{2020}.
\newblock \bibinfo{title}{Max-pressure traffic controller based on travel
  times: An experimental analysis}.
\newblock \bibinfo{journal}{Transp Res Part C Emerg Technol}
  \bibinfo{volume}{110}, \bibinfo{pages}{275–290}.
\newblock \DOIprefix\doi{10.1016/j.trc.2019.10.002}.
%Type = Article
\bibitem[{Ortigosa et~al.(2019)Ortigosa, Gayah and Menendez}]{ortigosa2019a}
\bibinfo{author}{Ortigosa, J.}, \bibinfo{author}{Gayah, V.},
  \bibinfo{author}{Menendez, M.}, \bibinfo{year}{2019}.
\newblock \bibinfo{title}{Analysis of one-way and two-way street configurations
  on urban grid networks}.
\newblock \bibinfo{journal}{Transportmetrica B: Transport Dynamics}
  \bibinfo{volume}{7}, \bibinfo{pages}{61–81}.
\newblock \DOIprefix\doi{10.1080/21680566.2017.1337528}.
%Type = Article
\bibitem[{Ortigosa and Menendez(2014)}]{ortigosa2014a}
\bibinfo{author}{Ortigosa, J.}, \bibinfo{author}{Menendez, M.},
  \bibinfo{year}{2014}.
\newblock \bibinfo{title}{Traffic performance on quasi-grid urban structures}.
\newblock \bibinfo{journal}{Cities} \bibinfo{volume}{36},
  \bibinfo{pages}{18–27}.
\newblock \DOIprefix\doi{10.1016/j.cities.2013.08.006}.
%Type = Inproceedings
\bibitem[{Pumir et~al.(2015)Pumir, Anderson, Triantafyllos and
  Bayen}]{pumir2015a}
\bibinfo{author}{Pumir, T.}, \bibinfo{author}{Anderson, L.},
  \bibinfo{author}{Triantafyllos, D.}, \bibinfo{author}{Bayen, A.},
  \bibinfo{year}{2015}.
\newblock \bibinfo{title}{Stability of modified max pressure controller with
  application to signalized traffic networks}, in: \bibinfo{booktitle}{2015
  American Control Conference (ACC). IEEE}, p. \bibinfo{pages}{1879–1886}.
\newblock \DOIprefix\doi{10.1109/ACC.2015.7171007}.
%Type = Article
\bibitem[{Schrank et~al.(2021)Schrank, Eisele and Lomax}]{schrank2021a}
\bibinfo{author}{Schrank, D.}, \bibinfo{author}{Eisele, B.},
  \bibinfo{author}{Lomax, T.}, \bibinfo{year}{2021}.
\newblock \bibinfo{title}{Urban mobility report} .
%Type = Inproceedings
\bibitem[{Stephanedes and Filippi(1996)}]{stephanedes1996a}
\bibinfo{author}{Stephanedes, Y.}, \bibinfo{author}{Filippi, F.},
  \bibinfo{year}{1996}.
\newblock \bibinfo{title}{Transport priority in real-time traffic control
  systems}, in: \bibinfo{booktitle}{Applications of Advanced Technologies in
  Transportation Engineering: Proceedings of the Fourth International
  Conference on Applications of Advanced Technologies in Transportation
  Engineering}, \bibinfo{publisher}{American Society of Civil Engineers}. p.
  \bibinfo{pages}{71–75}.
%Type = Inproceedings
\bibitem[{Tassiulas and Ephremides(1990)}]{tassiulas1990a}
\bibinfo{author}{Tassiulas, L.}, \bibinfo{author}{Ephremides, A.},
  \bibinfo{year}{1990}.
\newblock \bibinfo{title}{Stability properties of constrained queueing systems
  and scheduling policies for maximum throughput in multihop radio networks},
  in: \bibinfo{booktitle}{29th IEEE Conference on Decision and Control. IEEE},
  p. \bibinfo{pages}{2130–2132}.
\newblock \DOIprefix\doi{10.1109/CDC.1990.204000}.
%Type = Article
\bibitem[{Truong et~al.(2019)Truong, Currie, Wallace, Gruyter and
  An}]{truong2019a}
\bibinfo{author}{Truong, L.}, \bibinfo{author}{Currie, G.},
  \bibinfo{author}{Wallace, M.}, \bibinfo{author}{Gruyter, C.},
  \bibinfo{author}{An, K.}, \bibinfo{year}{2019}.
\newblock \bibinfo{title}{Coordinated transit signal priority model considering
  stochastic bus arrival time}.
\newblock \bibinfo{journal}{IEEE Transactions on Intelligent Transportation
  Systems} \bibinfo{volume}{20}.
\newblock \DOIprefix\doi{10.1109/TITS.2018.2844199}.
%Type = Article
\bibitem[{Varaiya(2013)}]{varaiya2013a}
\bibinfo{author}{Varaiya, P.}, \bibinfo{year}{2013}.
\newblock \bibinfo{title}{Max pressure control of a network of signalized
  intersections}.
\newblock \bibinfo{journal}{Transp Res Part C Emerg Technol}
  \bibinfo{volume}{36}, \bibinfo{pages}{177–195}.
\newblock \DOIprefix\doi{10.1016/j.trc.2013.08.014}.
%Type = Article
\bibitem[{Vlachogiannis et~al.(2023)Vlachogiannis, Wei, Moura and
  Macfarlane}]{vlachogiannis2023humanlight}
\bibinfo{author}{Vlachogiannis, D.M.}, \bibinfo{author}{Wei, H.},
  \bibinfo{author}{Moura, S.}, \bibinfo{author}{Macfarlane, J.},
  \bibinfo{year}{2023}.
\newblock \bibinfo{title}{Humanlight: Incentivizing ridesharing via
  human-centric deep reinforcement learning in traffic signal control}.
\newblock \bibinfo{journal}{arXiv preprint arXiv:2304.03697} .
%Type = Article
\bibitem[{Wu et~al.(2018)Wu, Ghosal, Zhang and Chuah}]{wu2018a}
\bibinfo{author}{Wu, J.}, \bibinfo{author}{Ghosal, D.}, \bibinfo{author}{Zhang,
  M.}, \bibinfo{author}{Chuah, C.N.}, \bibinfo{year}{2018}.
\newblock \bibinfo{title}{Delay-based traffic signal control for throughput
  optimality and fairness at an isolated intersection}.
\newblock \bibinfo{journal}{IEEE Trans Veh Technol} \bibinfo{volume}{67},
  \bibinfo{pages}{896–909}.
\newblock \DOIprefix\doi{10.1109/TVT.2017.2760820}.
%Type = Article
\bibitem[{Wu and Guler(2019)}]{wu2019a}
\bibinfo{author}{Wu, K.}, \bibinfo{author}{Guler, S.}, \bibinfo{year}{2019}.
\newblock \bibinfo{title}{Estimating the impacts of transit signal priority on
  intersection operations: A moving bottleneck approach}.
\newblock \bibinfo{journal}{Transp Res Part C Emerg Technol}
  \bibinfo{volume}{105}, \bibinfo{pages}{346–358}.
\newblock \DOIprefix\doi{10.1016/j.trc.2019.06.003}.
%Type = Article
\bibitem[{Wu et~al.(2020)Wu, Lu and Guler}]{wu2020a}
\bibinfo{author}{Wu, K.}, \bibinfo{author}{Lu, M.}, \bibinfo{author}{Guler,
  S.}, \bibinfo{year}{2020}.
\newblock \bibinfo{title}{Modeling and optimizing bus transit priority along an
  arterial: A moving bottleneck approach}.
\newblock \bibinfo{journal}{Transp Res Part C Emerg Technol}
  \bibinfo{volume}{121}, \bibinfo{pages}{102873}.
\newblock \DOIprefix\doi{10.1016/j.trc.2020.102873}.
%Type = Inproceedings
\bibitem[{Xiao et~al.(2014)Xiao, Frazzoli, Li, Wang and Wang}]{xiao2014a}
\bibinfo{author}{Xiao, N.}, \bibinfo{author}{Frazzoli, E.},
  \bibinfo{author}{Li, Y.}, \bibinfo{author}{Wang, Y.}, \bibinfo{author}{Wang,
  D.}, \bibinfo{year}{2014}.
\newblock \bibinfo{title}{Pressure releasing policy in traffic signal control
  with finite queue capacities}, in: \bibinfo{booktitle}{53rd IEEE Conference
  on Decision and Control. IEEE}, p. \bibinfo{pages}{6492–6497}.
\newblock \DOIprefix\doi{10.1109/CDC.2014.7040407}.
%Type = Article
\bibitem[{Xu et~al.(2022)Xu, Barman, Levin, Chen and Li}]{xu2022a}
\bibinfo{author}{Xu, T.}, \bibinfo{author}{Barman, S.}, \bibinfo{author}{Levin,
  M.}, \bibinfo{author}{Chen, R.}, \bibinfo{author}{Li, T.},
  \bibinfo{year}{2022}.
\newblock \bibinfo{title}{Integrating public transit signal priority into
  max-pressure signal control: Methodology and simulation study on a downtown
  network}.
\newblock \bibinfo{journal}{Transp Res Part C Emerg Technol}
  \bibinfo{volume}{138}.
\newblock \DOIprefix\doi{10.1016/j.trc.2022.103614}.
%Type = Article
\bibitem[{Yang et~al.(2019)Yang, Menendez and Guler}]{yang2019a}
\bibinfo{author}{Yang, K.}, \bibinfo{author}{Menendez, M.},
  \bibinfo{author}{Guler, S.}, \bibinfo{year}{2019}.
\newblock \bibinfo{title}{Implementing transit signal priority in a connected
  vehicle environment with and without bus stops}.
\newblock \bibinfo{journal}{Transportmetrica B: Transport Dynamics}
  \bibinfo{volume}{7}, \bibinfo{pages}{423–445}.
\newblock \DOIprefix\doi{10.1080/21680566.2018.1434019}.
%Type = Article
\bibitem[{Yang and Fan(2023)}]{yang2023a}
\bibinfo{author}{Yang, T.}, \bibinfo{author}{Fan, W.}, \bibinfo{year}{2023}.
\newblock \bibinfo{title}{Evaluation of transit signal priority at signalized
  intersections under connected vehicle environment}.
\newblock \bibinfo{journal}{Transportation Planning and Technology}
  \bibinfo{volume}{46}, \bibinfo{pages}{145–159}.
\newblock \DOIprefix\doi{10.1080/03081060.2023.2176308}.
%Type = Article
\bibitem[{Yu et~al.(2017)Yu, Gayah and Christofa}]{yu2017a}
\bibinfo{author}{Yu, Z.}, \bibinfo{author}{Gayah, V.},
  \bibinfo{author}{Christofa, E.}, \bibinfo{year}{2017}.
\newblock \bibinfo{title}{Person-based optimization of signal timing}.
\newblock \bibinfo{journal}{Transportation Research Record: Journal of the
  Transportation Research Board} \bibinfo{volume}{2620},
  \bibinfo{pages}{31–42}.
\newblock \DOIprefix\doi{10.3141/2620-04}.
%Type = Inproceedings
\bibitem[{Yu et~al.(2018)Yu, Gayah and Christofa}]{yu2018a}
\bibinfo{author}{Yu, Z.}, \bibinfo{author}{Gayah, V.},
  \bibinfo{author}{Christofa, E.}, \bibinfo{year}{2018}.
\newblock \bibinfo{title}{Implementing phase rotation in a person-based signal
  timing optimization framework}, in: \bibinfo{booktitle}{2018 21st
  International Conference on Intelligent Transportation Systems (ITSC). IEEE},
  p. \bibinfo{pages}{20–25}.
\newblock \DOIprefix\doi{10.1109/ITSC.2018.8569671}.
%Type = Article
\bibitem[{Zeng et~al.(2015)Zeng, Sun, Zhang and Quadrifoglio}]{zeng2015a}
\bibinfo{author}{Zeng, X.}, \bibinfo{author}{Sun, X.}, \bibinfo{author}{Zhang,
  Y.}, \bibinfo{author}{Quadrifoglio, L.}, \bibinfo{year}{2015}.
\newblock \bibinfo{title}{Person-based adaptive priority signal control with
  connected-vehicle information}.
\newblock \bibinfo{journal}{Transportation Research Record: Journal of the
  Transportation Research Board} \bibinfo{volume}{2487},
  \bibinfo{pages}{78–87}.
\newblock \DOIprefix\doi{10.3141/2487-07}.
%Type = Article
\bibitem[{Zeng et~al.(2021)Zeng, Zhang, Jiao and Yin}]{zeng2021a}
\bibinfo{author}{Zeng, X.}, \bibinfo{author}{Zhang, Y.}, \bibinfo{author}{Jiao,
  J.}, \bibinfo{author}{Yin, K.}, \bibinfo{year}{2021}.
\newblock \bibinfo{title}{Route-based transit signal priority using connected
  vehicle technology to promote bus schedule adherence}.
\newblock \bibinfo{journal}{IEEE Transactions on Intelligent Transportation
  Systems} \bibinfo{volume}{22}.
\newblock \DOIprefix\doi{10.1109/TITS.2020.2963839}.
%Type = Article
\bibitem[{Zhao and Ma(2018)}]{zhao2018a}
\bibinfo{author}{Zhao, J.}, \bibinfo{author}{Ma, W.}, \bibinfo{year}{2018}.
\newblock \bibinfo{title}{Optimizing vehicle and pedestrian trade-off using
  signal timing in intersections with center transit lanes}.
\newblock \bibinfo{journal}{J Transp Eng A Syst} \bibinfo{volume}{144}.
\newblock \DOIprefix\doi{10.1061/JTEPBS.0000145}.
%Type = Article
\bibitem[{Zhao and Zhou(2019)}]{zhao2019a}
\bibinfo{author}{Zhao, J.}, \bibinfo{author}{Zhou, X.}, \bibinfo{year}{2019}.
\newblock \bibinfo{title}{Improving the operational efficiency of buses with
  dynamic use of exclusive bus lane at isolated intersections}.
\newblock \bibinfo{journal}{IEEE Transactions on Intelligent Transportation
  Systems} \bibinfo{volume}{20}, \bibinfo{pages}{642–653}.
\newblock \DOIprefix\doi{10.1109/TITS.2018.2819243}.

\end{thebibliography}
\end{document}